\title{GRASMOS: Graph Signage Model Selection for Gene Regulatory Networks}
\author {
    % Authors
    Angelina Brilliantova,
    Hannah Miller, 
    Ivona Bezáková
}
\newcommand{\self}[2]{\mathrm{self}_{#1}({#2})}
\newcommand{\out}[1]{\mathrm{out}(#1)}
\newcommand{\cout}[2]{\mathrm{out}_{#1}({#2})}
\newcommand{\hcoutc}[3]{\overline{\mathrm{out}}_{#1,#2}({#3})}
\newcommand{\hcinc}[3]{\overline{\mathrm{in}}_{#1,#2}({#3})}
\newcommand{\cin}[2]{\mathrm{in}_{#1}({#2})}
\newcommand{\q}[1]{q_{#1}}
\newcommand{\C}[1]{\mathcal{C}_{#1}}
\newcommand{\E}[0]{\xi}
\newcommand{\ie}{i.e.}
\newcommand{\etc}{etc.}
\newcommand{\eg}{e.g.}
\newtheorem{theorem}{Theorem}
\newtheorem{corr}[theorem]{Corollary}
\newcommand{\A}[1]{\mathcal{A}^{#1}}
\newcommand{\Lh}[1]{\mathcal{L}(#1)}
\newcommand{\Oh}[1]{\mathcal{O}(#1)}
\newcommand{\pr}[1]{\mathcal{P}^{#1}}
\date{%
    Rochester Institute of Technology
}
\begin{document}

\maketitle
\begin{abstract}
    
Signed networks, i.e., networks with positive and negative edges, commonly arise in various domains from social media to epidemiology. 
Modeling signed networks has many practical applications, including the creation of synthetic data sets for experiments where obtaining real data is difficult. Influential prior works proposed and studied various graph topology models, as well as the problem of selecting the most fitting model for different application domains. However, these topology models are typically unsigned.

% What we do
In this work, we pose a novel Maximum-Likelihood-based optimization problem for modeling signed networks given their topology and showcase it in the context of gene regulation. Regulatory interactions of genes play a key role in organism development, and when broken can lead to serious organism abnormalities and diseases. Our contributions are threefold: First, we design a new class of signage models for a given topology. Based on the parameter setting, we discuss its biological interpretations for gene regulatory networks (GRNs). Second, we design algorithms computing the Maximum Likelihood -- depending on the parameter setting, our algorithms range from closed-form expressions to MCMC sampling. Third, we evaluated the results of our algorithms on synthetic datasets and real-world large GRNs. Our work can lead to the prediction of unknown gene regulations, the generation of biological hypotheses, and realistic GRN benchmark datasets.
\end{abstract}

\section{Introduction}
Networks with positive and negative edges (\textit{signed networks}) are ubiquitous across various domains. They work well for situations where objects modeled as network vertices have positive and negative interactions. Accounting for edge types helps substantially with many important network-related problems, such as missing link prediction \citep{li2017rethinking, li2020learning}, node ranking \citep{li2019supervised}, network synchronization \citep{monteiro2022fast}. Signed networks were successfully applied to model social interactions (trust/distrust in Epinions \citep{xu2019link}, epidemic spreading \citep{li2021dynamics}, political interactions between US Congressmen \citep{thomas2006get}, and gene regulation \citep{mason2009signed}.

Many frameworks for solving problems on signed networks exist -- but, to be successful in real-world applications, their performance needs to be benchmarked on realistic synthetic datasets, which remain extremely scarce for signed networks. Moreover, the difference between network structures from different domains might affect the assessment of algorithms' applicability and performance. Therefore, there is a need for generative domain-specific models to create new observations from the probability distribution underlying real-world instances and help algorithms avoid overfitting.
%avoid overfitting 

One of the fields affected by the scarcity of realistic signed network datasets is gene regulation. The diversity of cells, organs, and eventually organisms arises from simple regulatory interactions between genes through their transcripts.
If a transcript of a gene increases the production of the transcript of a target gene, the corresponding edge is modeled as positive; if the production decreases, the edge is modeled as negative. The set of all regulatory interactions along with a set of genes is called a \textit{gene regulatory network} (GRN). GRNs play a key role in the process of organism development, and when broken can lead to serious organism abnormalities and diseases \citep{alon2019introduction}. Understanding gene regulation mechanisms as well as identifying drug interventions often rely on reconstructing GRNs from the dynamics of the corresponding gene transcripts \citep{zhang2017similarity, lopes2020regulatory}. 

Plenty of GRN reconstruction algorithms were designed \citep{weighill2021gene, margolin2006aracne, pratapa2020benchmarking}, as well as some performance evaluation pipelines, such as GeneNetWeaver \citep{schaffter2011genenetweaver} and BEELINE \citep{beeline}. However, the synthetic datasets that such pipelines create have several unfavorable assumptions, jeopardizing any conclusion they may have derived. Some of such pipelines use subsampling from known GRNs, basically creating train datasets from the test ones, which potentially leads to severe overfitting and lack of generalization to unknown samples even from the same graph distribution \citep{schaffter2011genenetweaver}. Other common approaches include using small literature-curated models as ground truth and unsigned network topologies that lack the gene regulation signs \citep{beeline} and can not be used to assess the accuracy of sign prediction. 
%It is also known that synthetic datasets of gene transcripts dynamics do not recapitulate real-world characteristics, which can naturally stem from the fact, that they were generated based on the wrong signed network \citep{crowell2021built, cao2021benchmark}.

There are many generative models for unsigned networks, including preferential attachment \citep{barabasi1999emergence}, Erd\"os-R\'enyi \citep{erdHos1960evolution}, Kronecker \citep{leskovec2010kronecker}, and directed scale-free \citep{bollobas2003directed}, but only few generative models for signed networks \citep{jung2020balansing, derr2018signed}. For the unsigned networks, there have been developed model selection frameworks assessing the fit of the model to the dataset based on the maximum likelihood \citep{leskovec2010kronecker, bezakova2006graph}. Such frameworks aim to find the model and its parameters that have the highest probability of generating the dataset instance and allow for a more rigorous comparison of models compared to analyzing high-level network characteristics of synthetic and real-world instances. However, to the best of our knowledge, there are no model selection results for signed networks. % In this work, we pose a new Maximum-Likelihood-based optimization problem for modeling edge signage for a given graph topology and showcase it in the context of gene regulation:  Graph Signage Model selection for gene regulatory networks, GRASMOS.

In this work we designed a new class of signage models for gene regulatory networks, motivated by underlying biological processes. We then show how to select the model and its parameters that best fit the real-world datasets, using maximum likelihood. We call this optimization problem the Graph Signage Model selection for gene regulatory networks, GRASMOS.

Our signage models work for scenarios in which the graph topology is formed first and later is refined with the signs for the edges based on the nodes’ attributes. A real-life example is forming social interactions in a closed community (e.g. dormitories) in which you get acquainted with the people you live and interact with and later decide on your attitude to them. Some evolutionary hypotheses suggest a similar origin of gene regulation, with the genome structure defining the gene interactions in the course of species divergence \citep{bylino2020evolution, wittkopp2012cis}.  
Signage models facilitate the comparison of node interactions regardless of the underlying topology -- this might be useful for graphs generated with different topologies but following the same signage model,  allowing topology-agnostic comparisons and testing the plausibility of node interactions hypotheses. We acknowledge that the signage can depend on the topology. A natural next step for this work will be comparing the likelihood of a one-step approach with our two-step approach, \ie, comparing the cases where the signage is generated jointly with the network versus independently.

\paragraph{Contributions} 
% IB: I'd like to see this changed somewhat, but let's discuss it first
1) We design a class of graph signage models with a latent (hidden) vertex group assignment, which gets generated first, followed by the edge signage, the distribution of which is conditional on the vertex group assignment. 2) We pose a new optimization problem for modelling graph signage for gene regulatory networks, GRASMOS, which aims to select the model and its parameters that best fit the observed data using Maximum Likelihood.
3) We analyze the GRASMOS problem space and identify four cases of parameter combinations that have different intuition and induce different computational complexity of the likelihood computation. We give efficient algorithms for estimating the likelihood corresponding the given parameters. Depending on the parameter class they belong to, our algorithms range from closed-form expression to MCMC sampling. % For some models, we were able to also find the latent group partition.
4) We obtain 16-fold reduction in the MCMC algorithm runtime, by identifying parts of the data sets where MCMC can be replaced by exact computation and by accounting for symmetry in the parameter space.
%and identification of the conditional independent signed subgraphs allowed us to perform 16x reduction in the MCMC algorithm runtime, compared to a naive implementation.
5) We evaluate our framework on two real-world bacteria GRN datasets - \textit{E.coli} and \textit{B.subtilis} (both with thousands of nodes and edges), as well on synthetic datasets of comparable size.

\paragraph{Related work}
Works of \citep{derr2018signed, jung2020balansing} designed generators of signed networks based on the structural balance theory, which states that certain signed triangles (balanced) are more widespread than the other triangles (unbalanced). Recent works designed models generating a node embedding in signed networks and tested its performance for missing link prediction and node classification \citep{li2017rethinking, li2020learning}. Such algorithms aim to have high predictive power while in our work we aim for a model with high explanatory power.

\section{Preliminaries}
% Let $G = (V, \A{})$ be a graph, where $V$ is a set of $n$ vertices, and $\A{}$ is an adjacency tensor, $\A{}\in [0,1]^{n\times n \times 2}$. Adjacency tensor consists of two matrices representing the edges of two signs, $\A{} = \{\A{+}, \A{-}\}$, where $\A{+} = \{0,1\}^{n \times n}$ represents positive edges, $\A{-} = \{0,1\}^{n \times n}$  represent negative edges. $\A{+}_{u,v}=1$ ($\A{-}_{u,v}=1$) if and only if there is a directed positive (negative) edge from $u$ to $v$. Let $m_a = \sum_{(u,v)}\A{+}(u,v)$ correspond to the number of positive edges, $m_r = \sum_{(u,v)}\A{-}(u,v)$ correspond to the number of negative edges, $m = m_r+m_a$ to the total number of edges.

% Let $G^{\pm} = (V, \A{})$ be a graph, where $V$ is a set of $n$ vertices and $\A{}:V\times V\times\{+,-\}\to\{0,1\}$ corresponds to the \emph{signed} adjacency relation between the vertices: $\A{}(u,v,s)=1$ if and only if there is an edge from $u$ to $v$ with sign $s$. For convenience, for $s\in\{+,-\}$ define $\A{s}=\{(u,v)~|~\A{}(u,v,s)=1\}$ as the adjacency matrix consisting of all edges of sign $s$ and let $\A{s}(u,v)\in\{0,1\}$ be the indicator of whether $(u,v,s)\in \A{}$. %$\A{} = \{\A{+}, \A{-}\}$
%Let $m_a = \sum_{(u,v)}\A{+}(u,v)$ correspond to the number of positive edges, $m_r = \sum_{(u,v)}\A{-}(u,v)$ correspond to the number of negative edges, and let $m = m_r+m_a$ be the total number of edges.

% Attempt #2.
A \emph{signed} graph $G^{\pm} = (V, E, \A{})$ consists of a directed graph $G=(V,E)$ with vertex set $V$ and edge set $E\subseteq V\times V$, and a signage function $\A{}:E\to\{+,-\}$ determining a positive or a negative sign for each edge. % For convenience, for $s\in\{+,-\}$ define $\A{s}=\{(u,v)~|~\A{}(u,v,s)=1\}$ as the adjacency matrix consisting of all edges of sign $s$ and let $\A{s}(u,v)\in\{0,1\}$ be the indicator of whether $(u,v,s)\in \A{}$. %$\A{} = \{\A{+}, \A{-}\}$
Let $n=|V|$ be the number of vertices and $m=|E|=m_++m_-$ be the number of edges, where $m_s = |\{(u,v)\in E~|~\A{}(u,v)=s\}|$ is the number of edges with sign $s$.

For any $v\in V$ let $\cout{s}{v} = |\{(v,u) \in E~|~\A{}(v,u)=s\}|$ represent the number of outgoing edges from $v$ of type $s \in \{+, -\}$ and $\cin{s}{v} = |\{(u,v) \in E~|~\A{}(u,v)=s\}|$ represent the number of incoming edges of type $s$ to vertex $v$. For each sign $s \in \{+,-\}$, we define $\bar{s}$ as its complementary sign. Therefore, if $s=+$, then $\bar{s} = -$ and vice versa. The total number of outgoing edges from $v$ is $\out{v} = \cout{s}{v} + \cout{\bar{s}}{v}$. Let $\self{+}{v}$ represent the number of $+$ self-loops of vertex $v$, $\self{-}{v}$ represent the number of $-$ self-loops.

\section{Graph Signage Model}
We propose a graph signage model, where the signs of the edges of a given directed graph $G=(V,E)$ are driven by a random latent node partition $C$ and a parameter matrix $\E$. In particular, let $S$ be a set of symbolic node group labels, $q$ be a distribution over $S$ (\ie, $\q{}:S\to[0,1]$ where $\sum_{s\in S}q(s)=1$), and $\E$ be an $|S| \times |S|$ stochastic matrix $\xi_{x,y}$, where $x,y \in S$. The signage model first randomly creates a latent node partition $C:V\to S$ by assigning each vertex independently to one of the groups in $S$ according to the distribution $\q{}$. Then, for each edge $(u,v)\in E$, the model assigns the sign $+$ to this edge with probability $\xi_{C(u),C(v)}$, and the sign $-$ otherwise, obtaining a signed graph $G^{\pm}$. The parameters of the model are combined in the tuple $\Theta = (\E, \q{})$.

For convenience, we define $|S|\times |S|$ probability matrices $\pr{+}$ and $\pr{-}$ denoting the probabilities that, based on the node assignment of the end-points of an edge, the edge gets sign $+$ or $-$. Formally, for $s_1,s_2\in S$, let $\pr{+}_{s_1,s_2}:=\xi_{s_1,s_2}$ and $\pr{-}_{s_1,s_2}:=1-\xi_{s_1,s_2}$. In this work, we focus on gene regulatory networks (GRNs), where $S=\{A,R\}$, representing activator and repressor vertices.

\section{Graph Signage Model Selection, GRASMOS}
In this section, we pose a novel optimization problem, GRASMOS, aimed at finding the parameters of the generative graph signage model with the best explanatory power of the observed data, based on maximum likelihood. Recall that the signage model only generates the signs of the edges, not the topology; in a sense, it `augments' the edges with signs based on the existing graph topology.

Formally, given a signed graph $G^{\pm}=(V,\A{})$, we are looking for such combination of parameters $\Theta$ that has the highest probability of generating edge signs $\A{}$, denoted $P(\A{}|\Theta)$. For a fixed node group assignment, $C$, we can find the corresponding probability of the signs as:

\begin{align} \label{eq:lh_one_c}
    \Lh{\Theta|C} &:= P(\A{}|\Theta, C)=
    \prod_{(u,v)\in E} \pr{\A{}(u,v)}_{C(u),C(v)}
\end{align}

\iffalse
\begin{align} \label{eq:lh_one_c}
    \Lh{\Theta|C} &:= P(\A{}|\Theta, C)=\nonumber\\ &= \prod_{\substack{(u,v) \texttt{ s.t.} \\
    \A{}(u,v)=+}}  \pr{+}_C(u,v) \prod_{\substack{(u,v) \texttt{ s.t.} \\
    \A{}(u,v)=-}}  \pr{-}_C(u,v),
\end{align}
\fi

% IB: Paragraph below: Also say somewhere, and maybe remind the reader here, that this is a reasonable model for, e.g., GRNs.
In our model, we assume that the groups of vertices are unknown, so the probability of interest is a sum of conditional probabilities over all possible node assignments:

\begin{equation} \label{eq:lh_all_c}
    \mathcal{L}(\Theta)= P(\A{}|\Theta) = \sum_{C \in \C{}}P(\A{}|\Theta, C) \cdot P(C)
\end{equation}

GRASMOS aims to find model parameters $\Theta_{\mathrm{MLE}}$ % and group partition $C_{\mathrm{MLE}}$
with the highest probability (or likelihood) of realizing the edge signs $\A{}$. As is standard for the maximum likelihood approach, for numerical purposes we frame the optimization problem as finding the parameters $\Theta$ with the lowest negative log-likelihood:

\begin{align} \label{eq:optimization_problem1}
    \Theta_{\mathrm{MLE}} &= \arg \min_{\Theta}[-\log \mathcal{L}(\Theta)]  
%   \\ C_{\mathrm{MLE}} &= \arg \min_{C \in \C{}} \label{eq:optimization_problem2}  [- \log \Lh{\Theta_{\mathrm{MLE}}|C} ]
\end{align}

\subsection{Analysis of the GRASMOS Parameter Space}
GRASMOS is an optimization problem over an $(|S|^2+|S|-1)$-dimensional parameter space, since $\E$ is of dimensions $|S|\times |S|$ and $q$ is determined by $|S|-1$ probabilities.

While our signage model is general, in this work we focus on modeling GRNs where $S=\{A,R\}$. Therefore, from now on we assume $S=\{A,R\}$ and, therefore, the optimization is over a 5-dimensional parameter space.
For each parameter setting $\Theta$, the likelihood computation is potentially an exponential summation over $2^n$ node group assignments. However, on closer scrutiny, it turns out that points in some areas of the problem space are easier to estimate that the rest, and it is the relationship between $\E$ probabilities what determines the computational difficulty of the point estimation. In this section, we subdivide our main optimization problem (\ref{eq:optimization_problem1}) into several cases (models) according to the relationship between probabilities in $\E$ and provide intuition for each model. We present the models from the simplest to the most general.

\paragraph{Node-oblivious model (NO)}
For the problem space points with identical $\E$ entries, the node group partition does not affect the likelihood of $\Theta$, so we can estimate the likelihood of such points analytically. We call this case the \textit{node-oblivious} model, because the signs of edges are independent of the nodes' groups. Define $\xi = \xi_{A,A} = \xi_{A,R} = \xi_{R,A} = \xi_{R,R}$. Then, $\pr{+}_{s_1,s_2} = \xi$ and $\pr{-}_{s_1,s_2} = 1-\xi$ for every $s_1,s_2\in S$. This model can be viewed as a signage analogue of the well-studied Erd\"os-R\'enyi random graph model.
%$\xi \cdot \A{-}(u,v)$. 

\paragraph{Source-consistent model (SC)}
In parameter subspace, where $\E$ matrix has pairwise identical probabilities $\xi_{A,A} = \xi_{A,R} =: \xi_{A*}$ and $\xi_{R,A} = \xi_{R,R} = \xi_{R*}$. The edge sign probabilities depend only on the source node's group. In the context of gene regulation, that means that a gene mostly performs the regulation of the same sign: an activator gene tends to activate, and a repressor gene represses.  We call the parameter instances belonging to this case the \textit{source-consistent model}. This observation allows us to reformulate (\ref{eq:lh_one_c}) as a product of probabilities of signed outgoing edges over the set of vertices: % $V$:

\begin{align}\label{eq: lh_one_color_sc}
     P(\A{}|\Theta, C) = \prod_{v \in V} \xi_{C(v)*}^{\cout{+}{v}}\cdot (1-\xi_{C(v)}*)^{\cout{-}{v}}
\end{align}

\paragraph{Target-consistent model (TC)}
Similarly to the previous model, in the \textit{target-consistent} model, the edge probabilities related to the same source group are pairwise equal, i.e. $\xi_{A,A} = \xi_{R,A} =: \xi_{*A}$ and $\xi_{A,R} = \xi_{R,R} =: \xi_{*R}$, and the edge probabilities depends only on the target node's group. In the context of gene regulation, that means that a gene is either predominantly repressed or activated by other genes.. The likelihood of $\Theta$ for this model can be found as the product of probabilities of signed incoming edges:

\begin{align}\label{eq: lh_one_color_tc}
     P(\A{}|\Theta, C) = \prod_{v \in V} \xi_{*C(v)}^{\cin{+}{v}}\cdot (1-\xi_{*C(v)})^{\cin{-}{v}}
\end{align}

\paragraph{Bi-node-consistent model (BNC)}
In the \textit{bi-node consistent} model, the edge signs depend on the group assignment of both the source and the target nodes and the probabilities in $\E$ can be arbitrary. In this parameter subspace, we can see $\E$ instances inducing assortativity: \eg, whether nodes from the same group tend to have positive edges between each other, and negative edges to the nodes from the other group \citep{mussmann2015incorporating}, \citep{newman2002assortative}. We reformulate (\ref{eq:lh_one_c}) for the likelihood estimation of the BNC parameters as a product of the likelihood contributions of non-self-loop (outgoing) signed edges and self-loop signed edges for each vertex given vertex assignment $C$. Let $l(v, C)^{\mathrm{out}}$ %, $l(v, C)^{\mathrm{in}}$,
and $l(v, C)^{\mathrm{self}}$ be the likelihood contribution of the outgoing %, incoming
and self-loop edges of vertex $v$ to the $\Theta$ likelihood given $C$ in (\ref{eq:lh_one_c}). Thus, 
\begin{align}
    &l(v, C)^{\mathrm{out}} = \prod_{u: u \neq v, (v,u)\in E} \pr{\A{}(v,u)}_{C(v), C(u)}  \\
%    &l(v, C)^{\mathrm{in}} = \prod_{u: u \neq v, (u,v)\in E} \pr{\A{}(u,v)}_{C(u), C(v)}  \\
    &l(v, C)^{\mathrm{self}} = \xi_{C(v),C(v)}^{\self{+}{v}} \cdot  (1-\xi_{C(v),C(v)})^{\self{-}{v}}
\end{align}
\iffalse
\begin{align}
    &l(v, C)^{\mathrm{out}} = \prod_{\substack{u \neq v \texttt{ s.t.} \\
    \A{+}(v,u)=1}} \xi_{C(v), C(u)}\prod_{\substack{u'\neq v \texttt{ s.t.} \\
    \A{-}(v,u')=1}} 1-\xi_{C(v), C(u')}  \\
    &l(v, C)^{\mathrm{in}} = \prod_{\substack{u\neq v \texttt{ s.t.} \\
    \A{+}(u,v)=1}} \xi_{C(u), C(v)}\prod_{\substack{u' \texttt{ s.t.} \\
    \A{-}(u',v)=1}} 1-\xi_{C(u'), C(v)}  \\
    &l(v, C)^{\mathrm{self}} = \xi_{C(v),C(v)}^{\self{+}{v}} \cdot  (1-\xi_{C(v),C(v)})^{\self{-}{v}}
\end{align}
\fi
We rewrite (\ref{eq:lh_one_c}) with respect to the contribution of each vertex to the overall likelihood:

%\\ &= \xi_{g(v),A}^{\hcoutc{+}{A}{v}}\cdot (1-\xi_{g(v),A})^{\hcoutc{-}{A}{v}}\cdot \xi_{g(v),R}^{\hcoutc{+}{R}{v}}\cdot (1-\xi_{g(v),R})^{\hcoutc{-}{R}{v}}

% IB: THE in-COMPONENT NEEDS TO BE REMOVED???
% I commented it out, please see if it looks correct

\begin{align}\label{eq: lh_one_color_bnc}
     P(\A{}|\Theta, C) = \prod_{v \in V} l(v, C)^{\mathrm{out}} %\cdot l(v,C)^{\mathrm{in}}
     \cdot l(v,C)^{\mathrm{self}}
\end{align}

\paragraph{Problem space symmetry}
The GRASMOS problem space is symmetrical with respect to the likelihood values, and for any parameter $\Theta$ there is a unique parameter $\Theta'$ that has exactly the same likelihood for any given signed graph.

\begin{theorem}
For every $\Theta = (\E, \q{})$, there is $\Theta' = (\E', \q{})$ such that $\Lh{\Theta} = \Lh{\Theta'}$ for any given signed graph $G^\pm = (V, E, \A{})$ and $\Theta'$ is obtained by setting $\xi'_{RR} = \xi_{AA}$, $\xi'_{AA} = \xi_{RR}$, $\xi'_{RA} = \xi_{AR}$, $\xi'_{AR} = \xi'_{RA}$, $\q{A}'=1-\q{A}$.
\end{theorem}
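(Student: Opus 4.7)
The plan is to prove the theorem by exhibiting an explicit bijection on the space of latent node assignments that preserves likelihood contributions term-by-term, and then invoking the sum (\ref{eq:lh_all_c}). Concretely, for any $C:V\to\{A,R\}$, let $C'$ be the complementary assignment defined by $C'(v)=A$ iff $C(v)=R$. The map $C\mapsto C'$ is clearly an involution, hence a bijection on $\C{}$, so it suffices to show that $P(\A{}\mid\Theta,C)\cdot P(C)=P(\A{}\mid\Theta',C')\cdot P(C')$ for every $C$ and then sum over $C$.

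First I would handle the prior term. Writing $n_A$ for the number of vertices $v$ with $C(v)=A$, we have $P(C)=\q{A}^{n_A}\q{R}^{n-n_A}$. By construction $C'$ has exactly $n-n_A$ vertices assigned to $A$, and since $\q{A}'=1-\q{A}=\q{R}$ and $\q{R}'=\q{A}$, we get $P(C')=(\q{A}')^{n-n_A}(\q{R}')^{n_A}=\q{R}^{n-n_A}\q{A}^{n_A}=P(C)$, as desired.

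Next I would handle the conditional term. By (\ref{eq:lh_one_c}),
\begin{equation*}
P(\A{}\mid\Theta',C')=\prod_{(u,v)\in E}(\pr{\A{}(u,v)})'_{C'(u),C'(v)}.
\end{equation*}
The key identity to verify is that for each edge $(u,v)$, $(\pr{\A{}(u,v)})'_{C'(u),C'(v)}=\pr{\A{}(u,v)}_{C(u),C(v)}$. Since $C'$ swaps the two labels, $(C'(u),C'(v))$ runs through each of $(A,A),(A,R),(R,A),(R,R)$ exactly when $(C(u),C(v))$ runs through $(R,R),(R,A),(A,R),(A,A)$. The four stipulated equalities $\xi'_{RR}=\xi_{AA}$, $\xi'_{AA}=\xi_{RR}$, $\xi'_{AR}=\xi_{RA}$, $\xi'_{RA}=\xi_{AR}$ are precisely what makes $\xi'_{C'(u),C'(v)}=\xi_{C(u),C(v)}$ in all four cases; the identity for $\pr{-}=1-\pr{+}$ follows immediately. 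Multiplying across all edges gives the per-$C$ equality.

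Combining the two pieces and summing over $C$ via the bijection $C\mapsto C'$,
\begin{equation*}
\Lh{\Theta}=\sum_{C}P(\A{}\mid\Theta,C)P(C)=\sum_{C}P(\A{}\mid\Theta',C')P(C')=\sum_{C'}P(\A{}\mid\Theta',C')P(C')=\Lh{\Theta'}.
\end{equation*}
I do not anticipate a substantive obstacle here: the result is essentially a relabeling symmetry, and the only thing to be careful about is matching the swap of subscripts in $\E$ with the swap of labels in $C$, together with the corresponding flip of $\q{}$. The only mild bookkeeping step is confirming that self-loops, which are single edges in $E$, are handled by the same per-edge identity and therefore need no separate treatment despite their distinguished role in (\ref{eq: lh_one_color_bnc}).
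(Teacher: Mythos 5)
Your proof is correct and follows essentially the same argument as the paper's: pair each assignment $C$ with its complement $C'$, check that the prior terms match under $\q{A}'=1-\q{A}$ and that each edge's factor is preserved by the swap $\xi'_{X,Y}=\xi_{\bar{X},\bar{Y}}$, then sum over the bijection. No substantive difference from the paper's proof.
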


\begin{proof}
% IB: Appendix: Should change q into \q{A} in this proof...
Consider a $\Theta = (\xi_{AA}, \xi_{AR}, \xi_{RA}, \xi_{RR}, q)$ parameter and its counterpart $\Theta' = (\xi'_{AA}, \xi'_{AR},$ $\xi'_{RA}, \xi'_{RR}, q')$ where $\xi'_{RR} = \xi_{AA}$, $\xi'_{AA} = \xi_{RR}$, $\xi'_{RA} = \xi_{AR}$, $\xi'_{AR} = \xi'_{RA}$, $q'=1-q$. We will show that $\Lh{\A{}|\Theta} = \Lh{(\A{}|\Theta')}$. According to equation $\ref{eq:lh_all_c}$, $\Lh{\A{}|\Theta} = \sum_{C \in \C{}}P(\A{}|\Theta, C)\cdot P(C)$. Likewise, $\Lh{\A{}|\Theta'} = \sum_{' \in \C{}}P(\A{}|\Theta', C)\cdot P(C)$. For $\forall C \in \C{}$ there exist a unique $C' \in \C{}$ such that each vertex in $C$ has a complementary assignment in $C'$: $\forall v \in V, C'(v) = \bar{C}(v)$. Consider a pair of such vertex assignments $C, C'$ and their contribution to $\Lh{\A{}|\Theta}$ and $\Lh{\A{}|\Theta'}$ correspondingly. We will compare the contribution of $C$ to $\Lh{\A{}|\Theta}$, to the contribution of $C'$ to $\Lh{\A{}|\Theta'}$ and show that they are identical. In $C$ each edge $(u,v)\in E$, contributes $\mathcal{P}_{C(u), C(v)}^{\A{}(u,v)}$ to $\Lh{\A{}|\Theta}$, while under $C'$ each edge $(u,v)\in E$, contributes $\mathcal{P}_{\bar{C}(u), \bar{C}(v)}^{\A{}(u,v)}$. But for any $X,Y \in S$, $\xi_{X,Y} = \xi_{\bar{X}, \bar{Y}}$ by the definition of $\Theta'$, therefore $\mathcal{P}_{C(u), C(v)}^{\A{}(u,v)} = \mathcal{P}_{\bar{C}(u), \bar{C}(v)}^{\A{}(u,v)}$ and the contribution of any edge is identical to both the $C$ and $C'$ terms in the likelihoods $\Lh{\Theta}$ and $\Lh{\Theta'}$, respectively. Therefore, the contribution of all edges probabilities in $C$ to $\Lh{\A{}|\Theta}$ is equal to the contribution of all edges probabilities in $C'$ to $\Lh{\A{}|\Theta'}$.  Now, it suffices to show that the contribution of vertex probabilities is identical too. Consider the contribution of $P(C)$ and $P(C')$. $P(C) = q^{V_A} \cdot (1-q)^{V_R}$, $P(C') = q^{V'_A} \cdot (1-q)^{V'_R}$, where $V_A, V'_A$ ($V_R, V'_R$) are the numbers of activators (repressors) in $C, C'$ respectively. Since each vertex in $C$ has the opposite assignment in $C'$, $V_A = n - V'_A = V'_R$, $V_R = n-V'_R = V'_A$, and $q = 1-q'$ by the premise. Therefore, $P(C) = (1-q)^{V'_R}\cdot q^{V'_A} = P(C')$. We proved that the contribution of $C$ to $\Lh{\A{}|\Theta}$ is equal to the contribution of $C'$ to $\Lh{\A{}|\Theta'}$. Any $C \in \C{}$, has its complement $C' \in \C{}$, and therefore the total likelihood of $\Theta$ as a sum of $2^n$ likelihoods of $C$ will be equal to the total likelihood of $\Theta'$ as a sum of $2^n$ likelihoods of the corresponding $C'$.
\end{proof}

This allows us to reduce the search space of the optimization problem by a half. To simplify our notation, we omit the subscript in $\q{A}$ in the remainder of this paper.

\section{Likelihood Estimation of the GRASMOS Model Parameters}

In this section we show how to estimate the likelihood of the GRASMOS model parameters, depending on which of the above-stated models they belong to. The methods we design range in their complexity depending on the computational difficulty of the problem: we provide a closed form expression for the Node-oblivious model, a polynomial-time exact algorithm for the Source- and Target-consistent models, and an MCMC sampling algorithm for the Bi-node-consistent model.
We prove that the models (or, more precisely, their natural generalizations) are \emph{self-reducible} \citep{DBLP:journals/tcs/JerrumVV86}, which allows us to estimate the likelihood for a given $\Theta$ through a product of likelihood ratios of progressively smaller instances. For the BNC model, we use sampling to get an approximation of each of the likelihood ratios.
% For most of the models the key for efficient likelihood estimation is to use the so-called \textit{self-reducibility property} of the problem and the \textit{product of likelihood ratios} technique. For the BNC model, we used sampling to get an approximation of each of the likelihood ratios.

\subsection{Node-oblivious model}
For the NO model, from (\ref{eq:lh_one_c}) and (\ref{eq:lh_all_c}) we get $P(\A{}|\Theta, C) = P(\A{}|\Theta)  = \xi^{m_+}\cdot (1-\xi)^{m_-}$.
% [DONE, except comments] Each m_a should really be m_+, and m_r should be m_-.... Not changing it yet but it needs to be changed prior to the submission.
%Therefore, the likelihood function of $\Theta$ in case of Node oblivious parameters follows Binomial distribution, with success probability as the probability of $+$ edge. Binomial distribution has a closed form of maximum likelihood estimator for success probability, so we do not have to iterate over all $\Theta$ instances belonging to this case. For all parameters $\Theta$ belonging to this model, Equation \ref{eq:optimization_problem1} is solved analytically as $\Theta_{\mathrm{MLE}} = -\log(\frac{m_a}{m_a+m_r})$.
Therefore, the optimization problem (\ref{eq:optimization_problem1}) can be solved analytically: it is maximized for $\xi_{\max} := \frac{m_+}{m_++m_-}$.
%-\log(\frac{m_a}{m_a+m_r})$.
Since $\q{}$ does not contribute to $P(\A{}|\Theta)$, $\Theta_{\mathrm{MLE}}=(\xi_{\mathrm{MLE}},\q{\mathrm{MLE}})$, where $\xi_{\mathrm{MLE}}$ has all entries equal to $\xi_{\max}$ and $\q{\mathrm{MLE}}$ is arbitrary.
% IB: $\xi_{MLE}$ is a somewhat problematic notation, since then it is tricky to use a second subscript for $_{A,A}, etc.

\subsection{Likelihood estimation through the product of likelihood ratios}

%In all models except NO,
The total likelihood of $\Theta$ consists of the sum of $2^n$ likelihoods of $\Theta$ conditional on the node assignment (see (\ref{eq:lh_all_c})). We express $\Lh{\Theta}$ as the product of ratios of pairs of likelihoods on decreasingly smaller spaces of node group assignments: the reduction in space is achieved by fixing the groups for some nodes.
%In order to get $\Lh{\Theta}$, we will be estimating the ratio of two likelihoods on decreasingly smaller spaces of node group assignments: the reduction in space is achieved by fixing the groups for some nodes.
% As soon as we estimate all ratios of likelihoods, we can get the total likelihood $\Lh{\Theta}$ by using the information encoded in individual ratios and unwrapping the likelihood to the full space of group assignments.

%assignments will be forced to adhere to $\tilde{C}$. Let $V=\{v_1,v_2,\dots,v_n\}$. We define $\C{j}$ as the set of node assignments $C:V\to \{A,R\}$ such that $C(v_i)=\tilde{C}(v_i)$ for every $i\leq j$ (vertices $v_1,\dots,v_j$ have their group determined by $\tilde{C}$). Notice that $\C{0}$ is the set of all node assignments (with no restrictions), that $|\C{j}|=2^{n-j}$, and that $\C{n}=\{\tilde{C}\}$.

% IB: We should be consistent: are we using "vertex assignments" or "node assignments"? We should just stick to one term to avoid confusion. We should choose "vertex" or "node", and use only one term throughout the text (if we go for "node", we should have $V$ is a set of nodes).

Let $V=\{v_1,v_2,\dots,v_n\}$. For $s_1,\dots,s_j\in S$, we define $\C{j}^{[s_1,\dots,s_j]}$ as the set of node assignments $C:V\to \{A,R\}$ such that $C(v_i)=s_i$ for every $i\leq j$ (vertices $v_1,\dots,v_j$ have their group determined by $s_1,\dots,s_j$). Notice that $\C{0}$ is the set of all node assignments (with no restrictions) and that $|\C{j}^{[s_1,\dots,s_j]}|=2^{n-j}$ for any $s_1,\dots,s_j$.

% Let $\C{0}$ define a set of node assignments with no assigned nodes and $2^n$ elements in $\C{0}$. Let $\C{1}^{[g(v_1)]}$ be a set of node assignments with a fixed $v_1$'s group and $2^{n-1}$ elements. Let $\C{j}^{[g(v_1), ..., g(v_j)]}$ be a set of node assignments that have nodes $v_1$ through $v_j$ assigned and $2^{n-j}$ elements. Note that from  $\C{j-1}^{[g(v_1),...,g(v_{j-1)}]}$ to $\C{j}^{[g(v_1),...,g(v_j)]}$ there is an exponential shrinkage in size and $\C{n}$ has a single element with all nodes assigned to groups.

For our self-reducibility approach, let us fix a ``master'' node assignment $\tilde{C}:V\to\{A,R\}$ that gradually more and more vertices will adhere to. Let
\begin{equation}\label{eq:Z_j}
    Z_j := \sum_{C_j \in \C{j}^{[\tilde{C}(v_1),...,\tilde{C}(v_j)]}} P(\A{}|\Theta,C_j) \cdot P(C_j|j),
\end{equation}
where $P(C_j|j)$ is the probability that vertices $v_{j+1},\dots,v_n$ get the node assignment given by $C_j$. Notice that $P(C_j|j)$ is a probability distribution over the node group assignment subspace $\C{j}^{[\tilde{C}(v_1),...,\tilde{C}(v_j)]}$, and, therefore, $Z_j$ is the likelihood of $\Theta$ restricted to this subspace. %, where $C_j$ has arbitrary assigned remaining nodes from $v_{j+1}$ to $v_n$.

% Relate to all g(v_i) in C0? the following must hold: $g(v_i) = \arg \max_{g(v_i)\in \{A,R\}}()$
 
Our overall goal is to estimate $\Lh{\Theta}=Z_0$. We do this via the following product of likelihood ratios:
% This is how the individual likelihoods on subsequently smaller spaces are related to the likelihood on the full group assignment space:
\begin{equation}\label{eq:self_red}
    \frac{Z_1}{Z_0} \cdot \frac{Z_2}{Z_1} \cdot \frac{Z_3}{Z_2} ... \cdot \frac{Z_{n}}{Z_{n-1}} = \frac{Z_n}{Z_0}.
\end{equation}
Notice that $Z_n=P(\A{}|\Theta,\tilde{C})$ can be easily computed via (\ref{eq:lh_one_c}). Therefore, if we estimate each of the ratios $\sigma_j:=\frac{Z_{j}}{Z_{j-1}}$, we can compute $Z_0$ as follows: 

%Note that the group assignment of vertices $g(v_1), ..., g(v_n)$ is consistent across all likelihoods $Z_j$, $j\in [1,n]$. To optimize Equation \ref{eq:optimization_problem2}, the group assignment needs to be fixed to the assignment with the largest likelihood $C_{\mathrm{MLE}}$.

% Let $\sigma_j$ denote the ratio of two likelihoods defined on group assignment space that is off by one vertex, $\sigma_j = \frac{Z_j}{Z_{j-1}}$. Then from \Cref{eq:self_red}: 

\begin{equation}\label{eq:partition}
    P(\A{}|\Theta) = Z_0 = \frac{Z_n}{\prod_{j=1}^{n} \sigma_j}
\end{equation}

For numerical stability, we will choose $\tilde{C}$ so that each ratio $\sigma_j$ is reasonably far from $0$.

\begin{theorem} \label{thm:1/2}
There exists $\tilde{C}$ such that $\sigma_j\geq 1/2$ for every $j\in\{1,\dots,n\}$.
\end{theorem}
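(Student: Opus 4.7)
The plan is to build $\tilde{C}$ greedily, vertex by vertex. Suppose $\tilde{C}(v_1),\dots,\tilde{C}(v_{j-1})$ have already been chosen; for each candidate label $s\in\{A,R\}$ define
\begin{equation*}
W_j(s) \;:=\; \sum_{C \in \C{j}^{[\tilde{C}(v_1),\dots,\tilde{C}(v_{j-1}),s]}} P(\A{}|\Theta,C) \prod_{i=j+1}^{n} \q{C(v_i)}.
\end{equation*}
By (\ref{eq:Z_j}), $W_j(s)$ is exactly the value $Z_j$ would take if we set $\tilde{C}(v_j)=s$. The greedy rule is then $\tilde{C}(v_j) := \arg\max_{s\in\{A,R\}} W_j(s)$, which makes $Z_j = \max(W_j(A),W_j(R))$.

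The key step is to decompose $Z_{j-1}$ by conditioning on the value of $C(v_j)$. The weight $P(C_{j-1}|j-1)$ attached to any $C \in \C{j-1}^{[\tilde{C}(v_1),\dots,\tilde{C}(v_{j-1})]}$ factors as $\q{C(v_j)} \cdot \prod_{i>j} \q{C(v_i)}$, so splitting the sum defining $Z_{j-1}$ in (\ref{eq:Z_j}) according to $C(v_j)\in\{A,R\}$ yields
\begin{equation*}
Z_{j-1} \;=\; \q{A}\cdot W_j(A) + \q{R}\cdot W_j(R).
\end{equation*}
Since $\q{A} + \q{R} = 1$ with $\q{A},\q{R} \geq 0$, the right-hand side is a convex combination of $W_j(A)$ and $W_j(R)$ and is therefore at most $\max(W_j(A),W_j(R)) = Z_j$. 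Hence $\sigma_j = Z_j/Z_{j-1} \geq 1$, which is in fact strictly stronger than the claimed bound $\sigma_j \geq 1/2$.

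The argument has no substantive obstacle: once the decomposition $Z_{j-1} = \q{A}W_j(A) + \q{R}W_j(R)$ is in place, the bound reduces to the elementary inequality that a convex combination of two nonnegative reals does not exceed their maximum. The only subtlety worth flagging is degenerate cases where $\q{A} \in \{0,1\}$ or one of $W_j(A),W_j(R)$ vanishes; in each such case the greedy rule still selects the dominant term and $\sigma_j \geq 1$ persists whenever $Z_{j-1}>0$, and if $Z_{j-1}=0$ for some $j$ then $\Lh{\Theta} = Z_0 = 0$ and the likelihood question is trivial.
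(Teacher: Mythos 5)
Your proof is correct and takes essentially the same route as the paper: the same greedy, vertex-by-vertex construction of $\tilde{C}$ and the same decomposition $Z_{j-1}=\q{}Z_j^A+(1-\q{})Z_j^R$ (your $W_j(A),W_j(R)$ are the paper's $Z_j^A,Z_j^R$, and maximizing them is the same rule as maximizing $\sigma_j^A,\sigma_j^R$ since $Z_{j-1}$ is a common positive denominator). You even obtain the slightly stronger conclusion $\sigma_j\geq 1$ by bounding the convex combination by its maximum, whereas the paper only argues that $\q{}\sigma_j^A+(1-\q{})\sigma_j^R=1$ forces at least one ratio to be $\geq 1/2$.
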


\begin{proof}
We will define $\tilde{C}$ inductively. First, notice that the ratio $\sigma_j$ does not rely on the entire $\tilde{C}$ but only on its restriction to $\{v_1,\dots,v_j\}$. For $j\in\{1,\dots,n\}$, suppose $\tilde{C}(v_1),\dots,\tilde{C}({v_{j-1})}$ have been chosen so that $\sigma_i\geq 1/2$ for every $i<j$. We will show how to choose $\tilde{C}(v_j)$ so that $\sigma_j\geq 1/2$ by considering the two possible cases: $\tilde{C}(v_j)$ can be either an activator or a repressor. For $s\in\{A,R\}$, define $Z_j^s$ using (\ref{eq:Z_j}) with $[\tilde{C}(v_1),\dots,\tilde{C}(v_{j-1}),s]$. Let $\sigma_j^s:=\frac{Z_j^s}{Z_{j-1}}$. Recall that $P(C|j-1) = \q{}^{A(C,j)}(1-\q{})^{n-j-A(C,j)}$, where $A(C,j):=|\{i~|~C(V_i)=A,~i\geq j\}|$ stands for the number of activators among $v_j,\dots,v_n$ in $C$.
Notice that
\begin{eqnarray*}
Z_{j-1} &=& \sum_{C \in \C{j-1}^{[\tilde{C}(v_1),...,\tilde{C}(v_{j-1})]}} P(\A{}|\Theta,C) \cdot P(C|j-1)\\
&=& \sum_{C \in \C{j}^{[\tilde{C}(v_1),...,\tilde{C}(v_{j-1}),A]}} P(\A{}|\Theta,C)\cdot P(C|j-1) +\\
&& \sum_{C \in \C{j}^{[\tilde{C}(v_1),...,\tilde{C}(v_{j-1}),R]}} P(\A{}|\Theta,C)\cdot P(C|j-1)\\
&=& Z_j^A\q{} + Z_j^R(1-\q{}).
\end{eqnarray*}
Therefore, $\q{}\sigma_j^A + (1-\q{})\sigma_j^R=1$. This equation cannot hold if both $\sigma_j^A<1/2$ and $\sigma_j^R<1/2$. Therefore, choosing $\tilde{C}(v_j):=\arg\max_{s\in\{A,R\}}\sigma_j^s$ ensures that $\sigma_j = \max\{\sigma_j^A, \sigma_j^R\}\geq 1/2$.
\end{proof}

The proof builds $\tilde{C}$ constructively but the algorithmic efficiency is unclear (it involves summations over exponentially many terms). In the following sections, we will show how to construct $\tilde{C}$ and compute all the $\sigma_j^*:=\max\{\sigma_j^A, \sigma_j^R\}$ (and thus $\Lh{\Theta}$) efficiently.

% Now let's consider two cases for each $\sigma_j$: in $Z_j$ the last assigned vertex $v_j$ can be an activator (denote the corresponding ratio  $\sigma_j^a$), or a repressor (denote the corresponding ratio  $\sigma_j^r$). 
% Let $\sigma_j^* = \max\{\sigma_j^a, \sigma_j^r\}$ define the maximum over two $\sigma_j$ cases. For each $j$ fix the likelihood estimation on the previous assigned $j-1$ vertices and let $\beta_j^a$ be a portion of $Z_{j-1}$ likelihood on group assignments with $v_j$-activator: $\beta_j^a = \sum_{C_{j}^a \in \C{j-1}^{[g(v_1),...,g(v_j-1),A]}} P(\A{}|\Theta,C_{j}^a)\cdot P(C_j^a)$. Similarly, define $\beta_j^r = \sum_{C_{j}^r \in \C{j-1}^{[g(v_1),...,g(v_j-1),R]}} P(\A{}|\Theta,C_{j}^r)\cdot P(C_j^r)$. Clearly, $Z_{j-1} = \beta_j^a+\beta_j^r$. Then, the corresponding $\sigma_j$ ratios can be calculated as:

%\begin{align}\label{eq:sigma_a_case}
%    \sigma_j^a &= \frac{\beta_j^a}{\beta_j^a+\beta_j^r} \\
%    \sigma_j^r &= \frac{\beta_j^r}{\beta_j^a + \beta_j^r}\label{eq:sigma_r_case}
%\end{align}

%\begin{theorem}
%    $C_{\mathrm{MLE}} = \C{n}^{[g(v_1),...,g(v_n)]}$, if $g(v_i) = \arg \max (\sigma_j^a, \sigma_j^r)$ for every $i\in[1,n]$
%\end{theorem}
%\begin{proof}
%\note{TBA. LB: I need some help with this}
%\end{proof}

\subsection{Source-consistent and target-consistent models}
% IB: Note self: Come back to this intro paragraph
In the following section we show how to estimate the ratio of likelihoods $\sigma_j^{*}$ for the SC and TC models in linear time, $\Oh{n+m}$. Our algorithms rely on the fact that, in the SC and TC models, the group assignment of any vertex, $\tilde{C}(v_j)$, only affects the probabilities of edges connected to $v_j$. Therefore, the calculation of $\sigma_j^{*}$ depends only on vertex $v_j$ and its incident edges.

% IB: Note self: Come back to the thm statement
\begin{theorem}\label{thm:SC_alpha}
% Let $\tilde{C}$ be arbitrary.
When $\Theta$ belongs to the source-consistent model, for every $j \in \{1,\dots, n\}$ and every $\tilde{C}:V\to\{A,R\}$:
$$\sigma_j^* = \max\{ \frac{1}{\q{} + \alpha_j(1-\q{})}, \frac{\alpha_j}{\q{} + \alpha_j(1-\q{})}\},$$
where
$$\alpha_j := (\frac{\xi_R*}{\xi_A*})^{\cout{+}{v_j}} \cdot (\frac{1-\xi_R*}{1-\xi_A*})^{\cout{-}{v_j}}.$$
\end{theorem}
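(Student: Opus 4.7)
The plan is to exploit the vertex-wise factorization of the SC likelihood in (\ref{eq: lh_one_color_sc}) together with the simple balance identity $\q{}\sigma_j^A + (1-\q{})\sigma_j^R = 1$ that already appeared in the proof of Theorem~\ref{thm:1/2}.

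First, I would observe that in the SC model the factor contributed by a vertex $v$ to $P(\A{}|\Theta,C)$ depends only on $C(v)$ (namely $\xi_{C(v)*}^{\cout{+}{v}}(1-\xi_{C(v)*})^{\cout{-}{v}}$) and not on the group assignment of any other vertex. Therefore, comparing the sums defining $Z_j^A$ and $Z_j^R$ term-by-term over assignments of $v_{j+1},\dots,v_n$, the two corresponding summands differ only in the factor contributed by $v_j$, while $P(C|j-1)$ is identical in the two sums. This yields
\[
    Z_j^R \;=\; Z_j^A \cdot \frac{\xi_{R*}^{\cout{+}{v_j}}\,(1-\xi_{R*})^{\cout{-}{v_j}}}{\xi_{A*}^{\cout{+}{v_j}}\,(1-\xi_{A*})^{\cout{-}{v_j}}} \;=\; \alpha_j\,Z_j^A,
\]
and after dividing by $Z_{j-1}$, $\sigma_j^R = \alpha_j\,\sigma_j^A$.

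Next, I would plug this into the identity $\q{}\sigma_j^A + (1-\q{})\sigma_j^R = 1$ from the proof of Theorem~\ref{thm:1/2}, giving $\sigma_j^A\bigl(\q{} + \alpha_j(1-\q{})\bigr) = 1$, and hence
\[
    \sigma_j^A \;=\; \frac{1}{\q{}+\alpha_j(1-\q{})}, \qquad \sigma_j^R \;=\; \frac{\alpha_j}{\q{}+\alpha_j(1-\q{})}.
\]
Taking the maximum of these two quantities yields the formula in the theorem.

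The only real obstacle is the factorization step: one must be careful that when $v_j$ is fixed to $A$ vs $R$, only the single factor corresponding to $v_j$ in the product over $V$ changes, and that the prior $P(C|j-1)$ restricted to $\C{j}^{[\tilde{C}(v_1),\dots,\tilde{C}(v_{j-1}),s]}$ coincides on matching assignments of $v_{j+1},\dots,v_n$ (it does, since both equal $\q{}^{A(C,j)}(1-\q{})^{n-j-A(C,j)}$ with the same exponents). Once this bookkeeping is done, the rest of the argument is a one-line algebraic manipulation, and no condition on $\tilde{C}(v_1),\dots,\tilde{C}(v_{j-1})$ is needed, which is why the formula holds for every $\tilde{C}$.
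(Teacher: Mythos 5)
Your proposal is correct and takes essentially the same route as the paper's proof: pairing assignments that differ only at $v_j$, cancelling all factors except $v_j$'s to get $Z_j^R=\alpha_j Z_j^A$, and then using the split $Z_{j-1}=\q{}Z_j^A+(1-\q{})Z_j^R$ (equivalently the identity $\q{}\sigma_j^A+(1-\q{})\sigma_j^R=1$) to solve for $\sigma_j^A$ and $\sigma_j^R$. One minor naming slip: the prior weight in the sums defining $Z_j^A$ and $Z_j^R$ is $P(C|j)$ (over $v_{j+1},\dots,v_n$ only), which is why it coincides on matched terms; calling it $P(C|j-1)$ would include $v_j$ and differ by a $\q{}$ versus $(1-\q{})$ factor, but with that label corrected your argument matches the paper's.
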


\begin{proof}
Recall that $\sigma_j^* = \max\{\sigma_j^A, \sigma_j^R\}$, where $\sigma_j^A=\frac{Z_j^A}{Z_{j-1}}$ and $\sigma_j^R=\frac{Z_j^R}{Z_{j-1}}$.
We will first relate $Z_j^A$ and $Z_j^R$.
Consider a pair of vertex group assignments from $\C{j}^{[\tilde{C}(v_1),..., \tilde{C}(v_j)]}$, that differ only by the group of $v_j$:  $C^A \in \C{j}^{[\tilde{C}(v_1),...,\tilde{C}(v_{j-1}),A]}$ and $C^R\in \C{j}^{[\tilde{C}(v_1),...,\tilde{C}(v_{j-1}),R]}$. Both $C^A$ and $C^R$ have the first $j$ and the last $n-j-1$ vertices assigned identically and $v_j$ is an activator in $C^A$ and a repressor in $C^R$. % By definition $C^A$ contributes to $\beta_j^a$, $C^R$ contributes to $\beta_j^r$. Lets compare the likelihoods of $C^A$ and $C^R$:
Then:
\begin{align}\label{eq:alpha_SC}
      &\frac{P(\A{}|\Theta,C^R)\cdot P(C^R|j)}{P(\A{}|\Theta, C^A)\cdot P(C^A|j)} = \nonumber \\
      &= \frac{\prod_{v \in V} \xi_{C^R(v)*}^{\cout{+}{v}}\cdot (1-\xi_{C^R(v)}*)^{\cout{-}{v}}}{\prod_{v \in V} \xi_{C^A(v)*}^{\cout{+}{v}}\cdot (1-\xi_{C^A(v)}*)^{\cout{-}{v}}}\cdot \frac{P(C^R|j)}{P(C^A|j)} \nonumber \\
      &=  \frac{\xi_{R*}^{\cout{+}{v_j}}\cdot (1-\xi_{R*})^{\cout{-}{v_j}}\cdot }{\xi_{A*}^{\cout{+}{v_j}} (1-\xi_{A*})^{\cout{-}{v_j}}} \nonumber \\
%      &=  (\frac{\xi_{R*}}{\xi_{A*}})^{\cout{+}{v_j}} \cdot (\frac{1-\xi_{R*}}{1-\xi_{A*}})^{\cout{-}{v_j}} \cdot (\frac{1-\q{}}{\q{}}) \nonumber \\
      &= \alpha_j \nonumber,
\end{align}
where the first equality comes from (\ref{eq: lh_one_color_sc}) and the second equality reflects cancellation of terms for all vertices except $v_j$. 
Relating $Z_j^A$ and $Z_j^R$, we get:
\begin{eqnarray*}
    Z_j^R &=& \sum_{C^R \in \C{j}^{[\tilde{C}(v_1),...,\tilde{C}(v_{j-1}),R]}} P(\A{}|\Theta,C^R)\cdot P(C^R|j)\\
    &=& \sum_{C^A \in \C{j}^{[\tilde{C}(v_1),...,\tilde{C}(v_{j-1}),A]}} \alpha_j \cdot P(\A{}|\Theta,C^A)\cdot P(C^A|j)\\
    &=& \alpha_j \cdot Z_j^A.
\end{eqnarray*}
This allows us to compute $\sigma_j^A$ explicitly:
\begin{eqnarray*}
    \sigma_j^A &=& \frac{Z_j^A}{Z_{j-1}} = \frac{Z_j^A}{Z_j^A\q{} + Z_j^R(1-\q{})} = \frac{1}{\q{} + \alpha_j(1-\q{})}.
\end{eqnarray*}
Analogously, $\sigma_j^R = \frac{\alpha_j}{\q{} + \alpha_j(1-\q{})}$, and, therefore, $\sigma_j^* = \max\{\sigma_j^A, \sigma_j^R\}$ can be computed as stated in the theorem.
\end{proof}

\begin{corr}\label{corr:TC_Z}
For the source-consistent model, the likelihood of $\Theta$, $\Lh{\Theta}$, can be calculated in $\Oh{n+m}$ time.
\end{corr}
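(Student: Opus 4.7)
The plan is to instantiate the self-reducibility identity (\ref{eq:partition}) with the explicit expressions from Theorem \ref{thm:SC_alpha}, and then argue that each factor is computable in amortized constant time per vertex. The key structural observation I would lean on is that in the source-consistent case, the quantity $\alpha_j$ from Theorem \ref{thm:SC_alpha} depends only on the out-degree counts $\cout{+}{v_j}$ and $\cout{-}{v_j}$ together with the fixed parameters $\xi_{A*},\xi_{R*}$; it does \emph{not} depend on the earlier choices $\tilde{C}(v_1),\dots,\tilde{C}(v_{j-1})$. Consequently, the $\sigma_j^*$ values -- and therefore also the choices $\tilde{C}(v_j):=\arg\max_{s\in\{A,R\}}\sigma_j^s$ that the proof of Theorem \ref{thm:1/2} prescribes -- can be computed independently of each other, which decouples the inductive construction of $\tilde{C}$ from the lower-bound argument.

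First I would make one pass over $E$ to precompute $\cout{+}{v}$ and $\cout{-}{v}$ for every $v\in V$, at cost $\Oh{n+m}$. Next, for $j=1,\dots,n$, I would compute $\alpha_j$, both $\sigma_j^A$ and $\sigma_j^R$ from the closed forms of Theorem \ref{thm:SC_alpha}, and then set $\tilde{C}(v_j):=\arg\max_{s\in\{A,R\}}\sigma_j^s$ and $\sigma_j^*:=\max\{\sigma_j^A,\sigma_j^R\}$. Working in log-space, $\log\alpha_j$ is a constant-size linear combination of the precomputed out-degree counts with the precomputed logarithms $\log\xi_{A*},\log(1-\xi_{A*}),\log\xi_{R*},\log(1-\xi_{R*})$, so each iteration is $\Oh{1}$, for $\Oh{n}$ total. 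Finally, I would compute $Z_n=P(\A{}|\Theta,\tilde{C})$ by a single additional pass over $E$ using (\ref{eq:lh_one_c}) in $\Oh{m}$ time, and then assemble $\Lh{\Theta}=Z_n/\prod_{j=1}^{n}\sigma_j^*$ in $\Oh{n}$ further arithmetic operations.

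Adding the three phases gives $\Oh{n+m}$, as claimed. The lower bound $\sigma_j^*\geq 1/2$ from Theorem \ref{thm:1/2} falls out automatically from the $\arg\max$ rule -- it is not required for the time bound, but it is what justifies working in log-space without underflow. The main obstacle in the argument, such as it is, is verifying the independence of $\alpha_j$ from the earlier $\tilde{C}$-values (i.e., that the order of processing does not contaminate the formula from Theorem \ref{thm:SC_alpha}); once that is in hand, the rest is a routine edge-list bookkeeping argument. I would note that the same scheme will not work verbatim for the BNC model in the next section, precisely because there $\alpha_j$ ceases to be independent of earlier assignments.
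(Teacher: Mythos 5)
Your proposal is correct and follows essentially the same route as the paper's proof: precompute $\cout{+}{v}$ and $\cout{-}{v}$ in $\Oh{n+m}$, evaluate each $\sigma_j^*$ from the closed form of Theorem \ref{thm:SC_alpha}, define $\tilde{C}(v_j)$ by the $\arg\max$ rule, compute $Z_n$, and assemble $\Lh{\Theta}=Z_0$ via (\ref{eq:partition}). Your added observations (that $\alpha_j$ is independent of the earlier $\tilde{C}$-values, and that log-space arithmetic makes each iteration $\Oh{1}$ with the $1/2$-bound guarding against underflow) are minor refinements of the same argument, not a different approach.
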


%\begin{corr}\label{corr:TC_Z}
%The likelihood of $\Theta$, $\Lh{\Theta}$, for the Source Consistent model can be calculated as $Z_0 = \frac{Z_n}{\prod_{j\in[1,n]}\max\{ \frac{1}{\q{} + \alpha_j(1-\q{})}, \frac{\alpha_j}{\q{} + \alpha_j(1-\q{})}\}}$ in $\Oh{n+m}$ time, where $\alpha_j =  (\frac{\xi_R*}{\xi_A*})^{\cout{+}{v_j}} \cdot (\frac{1-\xi_R*}{1-\xi_A*})^{\cout{-}{v_j}} \cdot (\frac{1-\q{}}{\q{}})$ and $Z_n = \prod_{v\in V}\xi_{g(v)*}^{\cout{+}{v}}\cdot (1-\xi_{g(v)*})^{\cout{-}{v}} \cdot P(\C{n}^{[g(v_1),...,g(v_n)]})$ and $g(v)$ for any $v$ is determined by $\C{n}^{[g(v_1),...,g(v_n)]}$
%\end{corr}

% \note{We will need write how to  calculate $P(C)$ somewhere above}

\begin{proof}
In time $\Oh{n+m}$ we can precompute $\cout{+}{v}$ and $\cout{-}{v}$ for every vertex $v$.
Then, for every $j$, the computation of $\alpha_j$ takes $\Oh{\out{v}}$ time and the computation of $\sigma_j^*$ takes constant time.
% IB: The computational model remark could/should go elsewhere/earlier
By (\ref{eq:partition}),
$$Z_0 = \frac{Z_n}{\prod_{j\in\{1,\dots,n\}}\max\{ \frac{1}{\q{} + \alpha_j(1-\q{})}, \frac{\alpha_j}{\q{} + \alpha_j(1-\q{})}\}},$$
where $Z_n = \prod_{v\in V}\xi_{\tilde{C}(v)*}^{\cout{+}{v}}\cdot (1-\xi_{\tilde{C}(v)*})^{\cout{-}{v}}$. 

Recall that $\tilde{C}$ is defined as $\tilde{C}(v_j):=\arg\max_{s\in\{A,R\}}\sigma_j^s$, so each $\tilde{C}(v_j)$ can be computed in constant time. The computation of $Z_n$ takes $\Oh{\sum_{v\in V}(1+\out{v})}=\Oh{n+m}$ time. Since $\Lh{\Theta}=Z_0$, the corollary follows.
%Apply the calculation of $\sigma_j$ from Equation \ref{eq:sigma_j_alpha} to each vertex $j \in [1,n]$ in Equation \ref{eq:partition} to calculate likelihood $Z_0$. $\alpha_j^a$ is calculated in $\Oh{\mathrm{out(v)}}$, there are $n$ calculations of $\alpha_j^a$ (one for each vertex) and no edge contributes to more than one vertex, so the total running time is linear in number of edges and vertices.
\end{proof}

An analogous theorem and corollary holds for the target-consistent model (included in the Appendix).

\iffalse
% IB: Need to rewrite these + include in the Appendix
\begin{theorem}\label{thm:TC_alpha}
When $\Theta$ belongs to the Target Consistent model, for every $j \in [1, n]$, the ratio of likelihoods is equal to $\sigma_j^* = max( \frac{1}{1 + \alpha_j}, \frac{\alpha_j}{1 + \alpha_j})$, where $\alpha_j = (\frac{1-\xi*_R}{1-\xi*_A})^{\cin{-}{v_j}} \cdot (\frac{\xi*_R}{\xi*_A})^{\cin{+}{v_j}} \cdot (\frac{1-\q{}}{\q{}})$
\end{theorem}

\begin{corr}\label{corr:TC_Z}
The likelihood of $\Theta$, $\Lh{\Theta}$, for the Target Consistent model can be calculated as $Z_0 = \frac{Z_n}{\prod_{j\in[1,n]}max( \frac{1}{1 + \alpha_j}, \frac{\alpha_j}{1 + \alpha_j})}$ in $\Oh{n+m}$ where $\alpha_j = (\frac{1-\xi*_R}{1-\xi*_A})^{\cin{-}{v_j}} \cdot (\frac{\xi*_R}{\xi*_A})^{\cin{+}{v_j}} \cdot (\frac{1-\q{}}{\q{}})$, $Z_n = \prod_{v\in V}\xi_{g(v)}^{\cin{+}{v}}\cdot (1-\xi_{g(v)})^{\cin{-}{v}} \cdot P(\C{n}^{[g(v_1),...,g(v_n)]})$ and $g(v)$ for any $v$ is determined by $\C{n}^{[g(v_1),...,g(v_n)]}$
\end{corr}

The proofs for Theorem \ref{thm:TC_alpha} and Corollary \ref{corr:TC_Z} are analogous to the proofs for SC models.
\fi

\subsection{MCMC sampling for likelihood estimation of the bi-node-consistent model}
% \note{to-do list
% \begin{itemize}
%     \item implementation details and running time
% \end{itemize}}

% IB: Need to search for "color" throughout
% IB: Also, search for "sign" used for vertices...
 In contrast to the SC and TC models, in which an edge sign depends only on one endpoint and the ratio of likelihoods $\sigma^*_j$ can then be computed analytically, we are not aware of any analytical approach for $\sigma_j^*$ in the BNC model where both edge end-points influence the sign of the edge. The main difficulty arises from the fact that the calculation of $\sigma^*_j$ under BNC might depend on the colors of the remaining $n-j-1$ nodes, which, at that point, are still unassigned in the course of the algorithm. Instead of exact computation, we estimate %an approximation of
 each $\sigma^*_j$ via a Markov Chain Monte-Carlo (MCMC) sampling of vertex group assignments on the subspace corresponding to $Z_j$, with first $j$ vertices assigned to groups.
 %based on the probability of each full vertex group assignment on the subspace of likelihoods $Z_j$ with first $j$ vertex assigned to groups.
 
Suppose $\tilde{C}(v_1),\dots,\tilde{C}(v_{j})$ has been already defined. Let us define $w_j(C_j) := P(\A{}|\Theta, C_j) \cdot P(C_j|j)$
as the weight of the vertex group assignment $C_j \in \C{j}^{[\tilde{C}(v_1),...,\tilde{C}(v_j)]}$. To compute the likelihood $Z_j=\sum_{C_j \in \C{j}^{[\tilde{C}(v_1),...,\tilde{C}(v_j)]}} w_j(C_j)$, we will randomly generate $C_j$, with probability proportional to its weight. Therefore, the stationary distribution of the vertex group assignments should be $\mu_j(C_j):=w_j(C_j)/Z_j$.  To obtain this stationary distribution, we use the Metropolis-Hasting technique \citep{Metropolis,10.1093/biomet/57.1.97}, with state space $\Omega_j:=\C{j}^{[\tilde{C}(v_1),...,\tilde{C}(v_j)]}$ and the Markov chain transitions $\tau: \Omega_j  \times\Omega_j \rightarrow [0,1]$ defined as follows. Let $C_j$ be the current state.
\begin{itemize}
    \item Choose a uniformly random $z\in\{j+1,\dots,n\}$. Let $C'_j$ be identical $C_j$, except $C'_j(v_z)=\bar{C_j}(v_z)$ (the assignment of $v_z$ is opposite in $C_j$ and $C'_j$).
    \item With probability $\min\{1,\frac{w_j(C'_j)}{w_j(C_j)}\}$ move to state $C'_j$. Otherwise, stay at $C_j$.
\end{itemize}
For more details, see Algorithm \ref{alg:sample_step}.
Therefore, $\tau(C_j,C'_j)=\frac{1}{(n-j)}\min\{1,\frac{w_j(C'_j)}{w_j(C_j)}\}$ and $\tau(C_j^{(1)},C_j^{(2)})=0$ for all other $C_j^{(1)},C_j^{(2)}$ where $C_j^{(1)} \neq C_j^{(2)}, C_j^{(1)}\neq C_j^{'(2)}$ (the self-loops $\tau(C_j,C_j)$ correspond to the remaining probability, so that $\tau$ is a stochastic matrix). 

The underlying transition graph is analogous to the hypercube with $n-j$ dimensions, and, therefore, the state space is connected (\ie, we can get from every state to every other state using transitions of the Markov chain). This Markov chain is also aperiodic due to the presence of self-loop transitions and, therefore, it has a unique stationary distribution. The Metropolis-Hastings technique ensures that this stationary distribution is exactly the distribution $\mu_j$, \ie, proportional to the weights $w_j(C_j)$.

%y defined by the ratio of weights. For each $j$, the state space of the sampling is $\Omega_j := \C_j^{[g(v_1),...,g(v_j)]}$, and the transition relation $\tau: \Omega  \times\Omega \rightarrow [0,1]$. For a pair of group assignments $C_j$, $C_j^{'}\in \Omega_j$, $\tau(C_j, C_j')$ is the probability of transitioning from state $C_j$ to state $C_j'$. The details of $\tau(C_j, C_j')$ calculation is shown in Algorithm $\ref{alg:sample_step}$

% IB: Do references and equation numbers look the same???

Following the earlier outline (see (\ref{eq:self_red}) and (\ref{eq:partition})), we will be estimating $\sigma_j=\frac{Z_{j}}{Z_{j-1}}$. Recall that, for numerical stability, we wanted $\sigma_j\geq 1/2$. We showed that $\tilde{C}(v_j)=\arg\max_{s\in\{A,R\}}\sigma_j^s$ yields $\sigma_j = \max\{\sigma_j^A, \sigma_j^R\}\geq 1/2$. We will estimate both $\sigma_j^A$ and $\sigma_j^R$ simultaneously by generating samples from $\Omega_{j-1}=\C{j}^{[\tilde{C}(v_1),...,\tilde{C}(v_{j-1})]}$. A sample $C_{j-1}$ with $C_{j-1}(v_{j-1})=X$ will contribute to the $\sigma_j^X$ computation, for $X\in\{A,R\}$.

Let $C_{j-1}$ be a random sample from $\Omega_{j-1}$, drawn according to $\mu_{j-1}$. Define $f_A:\Omega_{j-1}\to\{0,1\}$ as the indicator function that the corresponding vertex assignment assigned $v_j$ as an activator: $f_A(C)=1$ if and only if $C(v_j)=A$.
Then,
% IB: weights need a subscript!!!
% IB: The equation below needs to be prettified
\begin{eqnarray*}
&& \mathbf{E}_{\mu_{j-1}}[f_A(C_{j-1})] \\
&&= \sum_{C\in\Omega_{j-1}}\mu_{j-1}(C)f_A(C)
= \sum_{C\in\Omega_{j-1}:C(v_j)=A}\frac{w_{j-1}(C)}{Z_{j-1}}\\
%&&= \frac{1}{Z_{j-1}}\sum_{C\in\Omega_{j-1}:C(v_j)=A}P(\A{}|\Theta, C) \cdot P(C|j-1) \\
%Z_{j-1} = \sum_{C \in \C{j-1}^{[\tilde{C}(v_1),...,\tilde{C}(v_{j-1})]}} P(\A{}|\Theta,C) \cdot P(C|j-1)} \\
&&= \frac{1}{Z_{j-1}}\sum_{C\in\C{j}^{[\tilde{C}(v_1),...,\tilde{C}(v_{j-1}),A]}}P(\A{}|\Theta, C) \cdot P(C|j-1) \\
&&= \frac{1}{Z_{j-1}}\sum_{C\in\C{j}^{[\tilde{C}(v_1),...,\tilde{C}(v_{j-1}),A]}}P(\A{}|\Theta, C) \cdot P(C|j)\q{} \\
&&= \frac{\q{}Z_j^A}{Z_{j-1}} = \q{}\sigma_j^A.
\end{eqnarray*}
Therefore, we can use the expectation of $f_A$ to estimate $\sigma_j^A$ (and, similarly, we can define $f_R$ and estimate $\sigma_j^R$ since $\mathbf{E}_{\mu_{j-1}}[f_R(C_{j-1})]=(1-\q{})\sigma_j^R$). The accuracy of the estimate increases with the number of samples: we will draw $k$ independent samples $C_j^{(1)},C_j^{(2)},\dots,C_j^{(k)}$ and compute the average $f_A(C_j)$, namely $\frac{1}{k}\sum_{i=1}^k f_A(C_j^{(i)})$. We use the same set of samples for the average $f_A$ and the average $f_R$ computation (each sample contributes to either $f_A$ or $f_R$). Since the expectation $\mathbf{E}_{\mu_{j-1}}[f_A(C_{j-1})]$ corresponds to the probability of drawing a sample with $v_j$ assigned as an activator, we draw such sample with probability $\frac{\q{}Z_j^A}{Z_{j-1}}$, and we draw a sample with $v_j$ as a repressor with probability $\frac{(1-\q{})Z_j^R}{Z_{j-1}}$. Since these two types of samples cover the state space, at least one of the types will be drawn with probability $\geq 1/2$. This means that we will be very likely to have a sufficient number of samples of at least one of the types, which will allow us to estimate the corresponding expected value closely. Since we do not know which of the types is more likely, we generate $k$ samples and estimate both the average $f_A$ and the average $f_R$. Then, we choose the larger average and define the corresponding $A$ or $R$ as $\tilde{C}(v_j)$. Then, divide the larger average by $\q{}$ or $(1-\q{})$ as appropriate, obtaining an estimate on $\sigma_j$.

This $\sigma_j$ might potentially be different than $\max\{\sigma_j^A, \sigma_j^R\}$ but we still have $\sigma_j\geq 1/2$:
Suppose $\q{}\sigma_j^A \geq (1-\q{})\sigma_j^R$. Then, $\q{}\sigma_j^A\geq \frac{1}{2}$ and we have a close estimate of this quantity. Then, divide this estimate by $\q{}$, obtaining an estimate of $\sigma_j$ where $\tilde{C}(v_j)=A$. We have $\sigma_j=\sigma_j^A\geq \frac{1}{2\q{}}\geq\frac{1}{2}$. Analogously, if $\q{}\sigma_j^A < (1-\q{})\sigma_j^R$, we obtain $\sigma_j=\sigma_j^R\geq \frac{1}{2(1-\q{})}\geq\frac{1}{2}$. 

Next we discuss the accuracy of this estimate. Suppose we aim to be within a $(1+\varepsilon)$-factor of the true $\Lh{\Theta}$, for some small $\varepsilon$. Since we have $n$ self-reducibility steps (\ie, $n$ quantities (ratios $\sigma_j$) to estimate, see (\ref{eq:self_red}) and (\ref{eq:partition})) and the estimated quantities are larger than $1/2$, we can follow the derivation in \citep{Jerrum}[Chapter 3, page 26] almost verbatim to obtain the desired number of samples per estimate. However, this number of samples is rather high for the datasets of our size, so we employ a convergence heuristics for the estimates, parameterized by an error parameter $\delta$, to reduce the empirical running time. The algorithm is summarized in Algorithm \ref{alg:BNC_lh}, where the last for-loop is explained in the following section.
% it suffices to be within $(1+\varepsilon/n)$-multiplicative error for each quantity.

%from $\C{j}^{[\tilde{C}(v_1),...,\tilde{C}(v_{j-1}),A]}$

% IB: GOT UP TO HERE

% In fact, instead of estimating $Z_j$ likelihood, we are only interested in estimating the ratio of two consecutive likelihoods $Z_j/Z_{j-1}$, so our algorithm will generate multiple samples of group assignments $C_j$, and check how many of them has $C_j(v_j)$ as activator. The proportion of samples with $v_j$ as activator will constitute the approximation for $\sigma_j^a$ ($\hat{\sigma_j}^a$) and the reciprocal proportion will constitute the approximated $\sigma_j^r$: $\hat{\sigma_j}^r = 1-\hat{\sigma_j}^a$. The approximation of $\sigma_j^*$ will be a maximization over those two approximated ratios. However, for BNC the group of $v_j$ with the maximum ratio does not translate into the group of $v_j$ in  $C_{\mathrm{MLE}}$\note{explain why?}, and for this type of the model we were not able to find it. Algorithm \ref{alg:BNC_lh} shows the full algorithm to estimate the likelihood of BNC $\Theta$ parameters.

\begin{figure}
\begin{algorithm}[H]
\footnotesize
\caption{ESTIMATE-BNC-LIKELIHOOD($G=(V,\A{})$, $\Theta$, $K$, $T$, $b$, $\delta$)}
\KwIn{A signed graph $G=(V,\A{})$, model parameters $\Theta$, number of samples $K$, mixing time $T$, convergence ``window'' $b$, convergence error $\delta$}
\label{alg:BNC_lh}
\begin{algorithmic}[1]
\STATE{Set $V_{\mathrm{done}} = \emptyset$}
\STATE{Throughout the algorithm, let $\tilde{C}: V_{\mathrm{done}} \rightarrow \{A,R\}$}
\STATE{Sort vertices of $V$ by their total degrees in descending order}%\tcp*{$\Oh{n\log n}$}
\FOR{$j=1$ to $n$}
\IF{$v_j \in V_{\mathrm{done}}$}
\STATE{BREAK}\tcp*{This vertex has already been assigned a group using analytical calculation}
\ENDIF
\STATE{Let $\mathrm{sum}_A = 0$ and $\mathrm{sum}_R = 0$.}
\FOR{$k=1$ to $K$}
  \STATE{C = \textit{$\mathrm{SAMPLE}(\A{},\Theta,\tilde{C},T)$}}%\tcp*{$\Oh{n \cdot T}$}
  \IF{$C(v_j)=A$}
  \STATE{Increment $\mathrm{sum}_A$}
  \ELSE
  \STATE{Increment $\mathrm{sum}_R$}
  \ENDIF
  \STATE{Let $S_k^A = \mathrm{sum}_A/k$ and $S_k^R = \mathrm{sum}_R/k$}
\IF{$|S_i^A-S_k^A|\leq\delta$ for every $i\in\{k-b,\dots,k-1\}$ or $|S_i^R-S_k^R|\leq\delta$ for every $i\in\{k-b,\dots,k-1\}$}
\STATE{BREAK}\tcp*{New samples are not changing the likelihood ratio}
\ENDIF
\ENDFOR
\STATE{$V_{\mathrm{done}} = V_{\mathrm{done}} \cup v_j$}
\IF{$S_k^A \geq \frac{1}{2}$}
\STATE{Let $\tilde{C}(v_j) = A$ and $S_j=S_k^A/\q{}$}
\ELSE
\STATE{Let $\tilde{C}(v_j) = R$ and $S_j=S_k^R/(1-\q{})$}
\ENDIF
\FOR{every neighbor $v_u$ of $v_j$} \IF{$v_u\not\in V_{\mathrm{done}}$ and all its neighbors are in $V_{\mathrm{done}}$}%\tcp*{$v_j$ is the last neighbor added to $V_{\mathrm{done}}$}
\STATE{$V_{\mathrm{done}} = V_{\mathrm{done}} \cup v_u$}
\STATE{Use (\ref{eq:analytical}) to calculate $S_u$ analytically}%\tcp*{$\Oh{deg(v_u)}$}
\IF{$S_u \geq \frac{1}{2}$}
    \STATE{Let $\tilde{C}(v_u)=A$ and $S_u=S_u/\q{}$}
\ELSE
    \STATE{Let $\tilde{C}(v_u)=R$ and $S_u=S_u/(1-\q{})$}
\ENDIF
\ENDIF
\ENDFOR
\ENDFOR
\STATE{Use (\ref{eq:lh_one_c}) to compute $Z_n = \Lh{\Theta|\tilde{C}}$}
\RETURN{$\frac{Zn}{\prod_{j=1}^n S_j}$}
\end{algorithmic}
\end{algorithm}
\caption{MCMC likelihood estimation of the BNC model for parameters $\Theta$. Sorting the vertices leads to faster convergence of the estimates.}
\end{figure}
%= \frac{1-q_A}{q_A} \cdot \prod_{X \in \{A,R\}} (\frac{\xi_{R \rightarrow X}}{\xi_{A \rightarrow X}})^{\hcoutc{X}{A}{v_{l}}} \cdot (\frac{1-\xi_{R \rightarrow X}}{1-\xi_{A \rightarrow X}})^{\hcoutc{X}{R}{v_{l}}} \cdot (\frac{\xi_{X \rightarrow R}}{\xi_{X \rightarrow A}})^{\hcinc{X}{A}{v_{l}}} \cdot (\frac{1 - \xi_{X \rightarrow R}}{1-\xi_{X \rightarrow A}})^{\hcinc{X}{R}{v_{l}}}\cdot \frac{\xi_{R \rightarrow R}}{\xi_{A \rightarrow A}}^{self_A(v_l)}\cdot \frac{1-\xi_{R \rightarrow R}}{1-\xi_{A \rightarrow A}}^{self_R(v_l)}$

% IB: vertex assignment vs group assignment

\begin{figure}
\begin{algorithm}[H]
\footnotesize
\caption{SAMPLE($\A{}$, $\Theta$, $C_p$, $T$)}
\KwIn{An adjacency relation $\A{}$, model parameters $\Theta$, a partial vertex assignment $C_p:\{v_1,\dots,v_{j-1}\} \rightarrow \{A,R\}$, %where only first $j-1$ vertices are assigned to groups,
and mixing time $T$}
\label{alg:sample_step}
\begin{algorithmic}[1]
\STATE{Let $C_j$ be $C_p$ with a uniformly random assignment on vertices $v_j,\dots,v_n$}\tcp*{Choose an initial state}% for MCMC, $\mathcal{O}(n)$}
\FOR{$t=1$ to $T$}
  \STATE{Choose  $z \in \{j, \dots, n\}$ uniformly at random}
 \STATE{Let $C'_j$ be identical to $C_j$, except $C'_j(v_z)=\bar{C}_j(v_z)$}
 % IB: Check the comment about z vs l
\STATE{Let $\mathrm{ratio} := \frac{1-q_A}{q_A} \cdot \prod_{X \in \{A,R\}} (\frac{\xi_{R,X}}{\xi_{A,X}})^{\hcoutc{+}{X}{v_{z}}} \cdot (\frac{1-\xi_{R,X}}{1-\xi_{A,X}})^{\hcoutc{-}{X}{v_{z}}} \cdot (\frac{\xi_{X,R}}{\xi_{X,A}})^{\hcinc{+}{X}{v_{z}}} \cdot (\frac{1 - \xi_{X,R}}{1-\xi_{X,A}})^{\hcinc{-}{X}{v_{z}}}\cdot \frac{\xi_{R,R}}{\xi_{A,A}}^{\mathrm{self}_{+}(v_z)}\cdot \frac{1-\xi_{R,R}}{1-\xi_{A,A}}^{\mathrm{self}_{-}(v_z)}$}
  \IF{$C(v_z)=A$ in $C_j$}
  \STATE{$\tau(C_j, C'_j) = \min(1,\frac{w_j(C'_j)}{w_j(C_j)}) =  \min(1, \mathrm{ratio})$}
  
  \ELSE 
  \STATE{$\tau(C_j, C'_j) = \min(1,\frac{w_j(C_j)}{w_j(C'_j)}) =  \min(1, \frac{1}{\mathrm{ratio}})$}
\ENDIF  
    \STATE{With probability $\tau(C_j, C'_j)$ set $C_j = C'_j$}
\ENDFOR
\RETURN{$C_j$}
\end{algorithmic}
\end{algorithm}
\caption{A Markov chain to produce a sample vertex assignment from $C_j\in \C{j}^{[\tilde{C}(v_1),...,\tilde{C}(v_j)]}$, generated with probability proportional to $w_j(C_j)$. $\hcoutc{s}{X}{v_z}$  denotes the number of outgoing edges of sign $s$ from $v_z$ to vertices with assignment $X\in\{A,R\}$. $\hcinc{s}{X}{v_z}$ denotes the number of incoming edges of sign $s$ from vertices assigned $X$ to $v_z$. Neither $\hcoutc{s}{X}{v_z}$ and $\hcinc{s}{X}{v_z}$ include the self-loops.}
\end{figure}

\begin{figure}
\begin{algorithm}[H]
\footnotesize
\caption{calculate ratio analytically}
\KwIn{Vertex $v_u$, partial node partition $C_p: V_p \rightarrow S$, where $V_p \subseteq V$, parameters $\Theta$}
\label{alg:analytic_calc_ratio}
\begin{algorithmic}[1]
\STATE{$V_p = V_p \cup v_u$}
\STATE{Let $\beta_u^a=1$}
\STATE{Let $\beta_u^r=1$}
\FOR{edge $e \in E(v_u)$} 
\STATE{Let $p_a$ be the probability of edge $e$ if $v_u$ is activator}
\STATE{Let $p_r$ be the probability of edge $e$ if $v_u$ is repressor}
\STATE{$\beta_u^a = \beta_u^a * p_a$}
\STATE{$\beta_u^r = \beta_u^r * p_r$}
\ENDFOR
\STATE{Let $\sigma_u^a = \frac{q\beta^a_u}{q\beta^a_u + (1-q)\beta^r_u}$}
\STATE{Let $\sigma^r_u = 1-\sigma^a_u$}
\IF{$\sigma^a_u > \sigma^r_u$}
\STATE{$C_p(v_u)=A$}
\STATE{$\sigma_j^* = \sigma^a_u$}
\ELSE{
\STATE{$C_p(v_u)=R$}
\STATE{$\sigma_j^* = \sigma^r_u$}
}
\ENDIF
\end{algorithmic}
\end{algorithm}
\caption{A function to analytically calculate the ratio for the vertex that is connected only to already assigned vertices}
\end{figure}

\subsection{Algorithmic speed-up of MCMC}
We combine the MCMC approach with exact calculation to achieve significant speed-up for real-world data sets. %We used the MCMC approach to sample the node partitions on the space of ''free'' unassigned vertices to estimate the ratio of likelihoods. However,
The GRNs, as many other real-world networks, are not dense: they usually have a few highly connected nodes (called \textit{hubs} in network science) and many low-degree nodes. For such low-degree nodes, the use of MCMC is unnecessary and computationally wasteful: as soon as all of their adjacent vertices are assigned, the ratio of likelihoods for such nodes can be calculated analytically. % (See Algorithm \ref{alg:analytic_calc_ratio}).

\begin{theorem}
\label{thm:speedup}
Let $C:V\to\{A,R,\mathrm{undefined}\}$ be a partial vertex assignment. Reorder the vertices so that $v_1,\dots,v_{j-1}$ are assigned by $C$ to either $A$ or $R$, and all other vertices are undefined. Suppose that all of $v_j$'s neighbors are already assigned by $C$. %(\ie, the neighbors are $<j$).
Then $\sigma_j^*$ can be calculated as:
\begin{equation}
    \sigma_j^* = \max\{\frac{\q{}\beta_j^A}{\q{}\beta_j^A + (1-\q{})\beta_j^R}, \frac{(1-\q{})\beta_j^R}{\q{}\beta_j^A + (1-\q{})\beta_j^R} \}, \label{eq:analytical}
\end{equation}
\iffalse
% IB: Might be useful to bring it to the proof
\begin{align}
    \sigma_j^* &= \max\{\frac{Z_j^A}{Z_{j-1}},\frac{Z_j^R}{Z_{j-1}}\}\nonumber \\
&= \max\bigg\{\frac{\q{}\beta_j^A}{\q{}\beta_j^A + (1-\q{})\beta_j^R}, \frac{(1-\q{})\beta_j^R}{\q{}\beta_j^A + (1-\q{})\beta_j^R} \bigg\}, \label{eq:analytical}
\end{align}
\fi
where $\beta_j^A(C)$ is the product of probabilities of all adjacent edges to vertex $v_j$ (except self-loops), driven by vertex assignment $C$ and $v_j$ being an activator. 

Formally, $\beta_j^A(C)=\prod_{i<j:(v_i,v_j)\in E}\pr{\A{}(v_i,v_j)}_{C(v_i),A)}\prod_{i<j:(v_j,v_i)\in E}\pr{\A{}(v_j,v_i)}_{C(A,v_i))}$.
Similarly, $\beta_j^R(C)$ is a product of probabilities of $v_j$'s edges when $v_j$ is a repressor.  
\end{theorem}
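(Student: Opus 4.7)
The plan is to mimic the pairing argument from the proof of Theorem \ref{thm:SC_alpha}, but now exploiting a much stronger locality fact: because every neighbor of $v_j$ already lies in $\{v_1,\dots,v_{j-1}\}$, flipping the group of $v_j$ multiplies the likelihood by a factor that does not depend on how the still-undefined vertices $v_{j+1},\dots,v_n$ are assigned. This is precisely what breaks the BNC computation in general, and what we recover here.

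Concretely, I would set up the natural bijection
$$\phi:\C{j}^{[\tilde{C}(v_1),\dots,\tilde{C}(v_{j-1}),A]}\to\C{j}^{[\tilde{C}(v_1),\dots,\tilde{C}(v_{j-1}),R]}$$
that flips only the assignment of $v_j$. For any paired $C^A$ and $C^R=\phi(C^A)$, the two assignments agree on $v_{j+1},\dots,v_n$, so $P(C^A|j)=P(C^R|j)$ and the prior factor cancels in every ratio of interest. Next, expanding $P(\A{}|\Theta,C)$ via (\ref{eq:lh_one_c}), any edge $(u,v)$ with neither endpoint equal to $v_j$ contributes the same factor under $C^A$ and $C^R$ since the two assignments agree outside $\{v_j\}$. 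For each non-self-loop edge incident to $v_j$, the opposite endpoint $v_i$ is by hypothesis already frozen to $\tilde{C}(v_i)$, so the edge contributes an entry of $\pr{\A{}(\cdot)}$ whose only argument that varies between $C^A$ and $C^R$ is the one corresponding to $v_j$. Multiplying these factors over all edges incident to $v_j$ (absorbing the self-loop contributions at $v_j$ if present, since they too depend only on $v_j$'s group) yields the constant ratio
$$\frac{P(\A{}|\Theta,C^R)}{P(\A{}|\Theta,C^A)}=\frac{\beta_j^R(\tilde{C})}{\beta_j^A(\tilde{C})},$$
independent of the particular pair chosen.

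Summing over $\phi$ then gives $Z_j^R=(\beta_j^R/\beta_j^A)\,Z_j^A$, exactly analogous to the SC reduction. Substituting this into the identity $Z_{j-1}=\q{}Z_j^A+(1-\q{})Z_j^R$ derived in the proof of Theorem \ref{thm:1/2} and simplifying produces
$$\frac{\q{}Z_j^A}{Z_{j-1}}=\frac{\q{}\beta_j^A}{\q{}\beta_j^A+(1-\q{})\beta_j^R},\qquad \frac{(1-\q{})Z_j^R}{Z_{j-1}}=\frac{(1-\q{})\beta_j^R}{\q{}\beta_j^A+(1-\q{})\beta_j^R}.$$
Taking the maximum of these two quantities matches the stated formula, which is precisely what the algorithm computes when assigning $v_j$ analytically and what will play the role of $\sigma_j^*$ in the self-reducibility product (\ref{eq:partition}). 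The argument is largely bookkeeping; the main subtlety is verifying that no cross-term involving an unassigned $v_k$ ($k>j$) survives in the likelihood ratio, and that is exactly where the ``all neighbors already assigned'' hypothesis is used.
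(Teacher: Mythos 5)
Your proposal is correct and follows essentially the same route as the paper: the pairing/cancellation argument of Theorem~\ref{thm:SC_alpha} applied locally at $v_j$ (valid here because all of $v_j$'s neighbors are frozen, so the flip ratio $Z_j^R/Z_j^A=\beta_j^R/\beta_j^A$ is constant over the pairing), combined with $Z_{j-1}=\q{}Z_j^A+(1-\q{})Z_j^R$, which yields the two weighted quantities $\q{}\sigma_j^A$ and $(1-\q{})\sigma_j^R$ whose maximum is exactly (\ref{eq:analytical}) as used by the algorithm. One minor remark: you absorb $v_j$'s self-loop factors into the flip ratio, whereas the theorem's $\beta_j^A,\beta_j^R$ formally exclude self-loops; your treatment is the safer one, since when $v_j$ has self-loops the ratio does pick up an extra factor involving $\xi_{AA}$ and $\xi_{RR}$, just as Algorithm~\ref{alg:sample_step} includes it.
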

The proof appears in the Appendix. The theorem allows us to compute $\sigma_j^*$ exactly  in time proportional to $v_j$'s degree.

For a mixing time $T$ and a number of samples $K$, the running time of Algorithm \ref{alg:BNC_lh} is $\Oh{n^2KT}$: Each step of the Markov chain can be implemented in $\Oh{n}$ time and, thefore, SAMPLE (see Algorithm \ref{alg:sample_step}) takes time $\Oh{Tn}$. In fact, the $\Oh{n}$-term can be reduced to $\Oh{\deg(v_z)}$ time, which for typical real-world data amortizes to $\Oh{1}$.
Adding in the loops in Algorithm \ref{alg:BNC_lh} of $n$ and $K$ iterations, respectively, we get the stated time bound. Moreover, applying Theorem \ref{thm:speedup}, for real-world data, for the vast majority of vertices we can reduce the $\Oh{Tn}$ term to $\Oh{\mathrm{\deg}}$, which further amortizes to $\Oh{1}$, vastly improving the running time. We note that for real-world data, even though we computed the desired $K$ theoretically following the calculation in \citep{Jerrum}, we found that we did not need to generate all $K$ samples since the convergence heuristics involving $b$ and $\delta$ terminated the loop significantly earlier. Moreover, these heuristical estimates were very close to the target values for all manually checked data (including large data sets where we knew the exact values, for example, when the parameters fall under the NO, SC, or TC models), validating the choice of our convergence heuristics parameters $b$ and $\delta$ and our mixing time estimate $T$ (also obtained heuristically).

% \begin{itemize}
%     \item Computational challenges: n-j vertices are still unassigned
%     \item Metropolis-Hasting algorithm
%     \item Analytical estimation of conditionally independent signed subgraphs (CIS): 8X reduction in running time while keeping the space in O(n+m)
% \end{itemize}

% NOTE: For mathematical environments, you may reduce fontsize {\bf but not below 6.5 point}.

%\begingroup\makeatletter\def\f@size{7}\check@mathfonts
%\def\maketag@@@#1{\hbox{\m@th\large\normalfont#1}}%

%\frac{\sum_{\substack{\\C \in \C{j}^{[c(v_1), ...,c(v_{j-1}),A]}}}Pr(\A{}|\Theta, C) \cdot P(C)}{\sum_{C \in \C{j}^{[c(v_1), ...,c(v_{j-1}),A]}}Pr(\A{}|\Theta,C) \cdot P(C) + \sum_{C\in \C{j}^{[c(v_1), ...,c(v_{j-1}),R]}}Pr(\A{}|\Theta, C) \cdot P(C)}

%\sigma_j^a =  \frac{\sum_{\substack{\\C \in \C{j}^{[c(v_1), ...,c(v_{j-1}),A]}}}Pr(\A{}|\Theta, C) \cdot P(C)}{\sum_{C \in \C{j}^{[c(v_1), ...,c(v_{j-1}),A]}}Pr(\A{}|\Theta,C) \cdot P(C) + \sum_{C[j, ..., n] \in \C{j}^{[c(v_1), ...,c(v_{j-1}),R]}}Pr(\A{}|\Theta, C) \cdot P(C)} = \frac{\beta_a}{\beta_a + \beta_r}

\begin{table}[t!]
    \footnotesize
    \centering
    \begin{tabular}{ccccc}
    Type & \# of nodes & \# of edges & \# of $+$ edges & \# of $-$ edges \\
    \toprule
    Regulon (\textit{E coli}) & 1922 & 4265 & 2256 & 2000 \\
    
    SubtiWiki (\textit{Bacillus subtilis}) & 2563 & 5283 & 3436 & 1847  \\
    
    Synthetic (sampled from DSF) &  2000 & Median*: 3100, CI: (2938; 3275) & NA** & NA** \\
    
    \end{tabular}
    \caption{Network characteristics of real-world and synthetic datasets. 9 edges of unknown types were filtered out from Regulon database  * The median and confidence interval over 40 generated topologies from DSF 
    ** The number of signed edges varied substantially based on $\Theta$-generator}
    \label{tab:network_descr}
\end{table}

\section{Empirical Evaluation}
\subsection{Data Collection and setup}
\paragraph{Synthetic datasets}
To validate that our algorithms can correctly identify the GRASMOS parameters, we created synthetic datasets with known $\Theta$, resembling real-world GRNs by their network parameters and size. The characteristics of all synthetic and real-world datasets we used are shown in Table \ref{tab:network_descr}. Our evaluation pipeline had two steps: generation and reconstruction. During the generation part, we generated network topologies from the Directed Scale Free (DSF) model, known to well capture the characteristics of GRNs \citep{van2006syntren}, then using several different $\Theta$ parameters we sampled the vertex group partition and assigned $+$ and $-$ signs to the edges. The DSF model is an iterative generator that grows the networks by adding the vertices and edges until the desired size \citep{bollobas2003directed}. At each step it performs one of these three steps: (1) with probability $\alpha$ it adds a new vertex with an edge to an existing vertex $u$, where $u$ is chosen proportionally to its in-degree and an \textit{in-degree bias term $\delta_{in}$}; (2) with probability $\gamma$ it adds a new vertex with an edge from an existing vertex $u$, where $u$ is chosen proportionally to its out-degree and an \textit{out-degree bias term $\delta_{out}$}; (3) with probability $\beta$ it adds an edge between a pair of existing vertices, where a source node is chosen according to its out-degree and $\delta_{out}$, and the target node according to its in-degree and $\delta_{in}$. For our experiments, we set $\alpha = 0.41$, $\beta = 0.49$, $\gamma = 0.1$,  $\delta_{in} = 0$, $\delta_{out} = 0.05$ following the works of \citep{bollobas2003directed} and \citep{van2006syntren}. We filtered out any duplicated edges from the graph instances to meet our model specification. 

For illustrative purposes, in the Results section below we present our validation for the source-consistent model: We chose several $\Theta$ (we refer to them as $\Theta$-generators) from this model and generated corresponding signages for our graph. %, since it exhibits a nice balance between the number of parameters (three ``true'' parameters in $\Theta$) and the efficiency of the likelihood computation. The values of $\Theta$ we tried for the generation part ($\Theta$-generators), belonged to the source-consistent model.
%We chose SC, because choosing BNC would not allow us to repeat experiments sufficiently many times due to large running time.
%\todo[inline]{How to justify that we did not test NO and TC?}
% Ivona: We don't justify it, we simply state what we did. I doubt people will need to see the symmetric TC or a special NO case
During the reconstruction part, we tested multiple candidate $\Theta$ ($\Theta$-candidates) belonging to the SC, TC, and NO models, checked the proportion of samples in which the candidate with the best likelihood $\Theta_{\mathrm{MLE}}$ matched the $\Theta$-generator, and calculated the $L_1$-norm between the $\Theta$-generator and $\Theta_{\mathrm{MLE}}$. The parameters within the $\Theta$-candidates varied between $0.1$ and $0.9$, with increment step of $0.1$. %, and additional constraints imposed by SC and TC models (the edge probabilities had to be pairwise identical). %The values of each parameter in $\Theta$-candidate varied in $[0.05, 0.95]$ with increment step $0.05$.
We tested $4$ different $\Theta$-generators in the generation part, and for each of them 1458 of $\Theta$-candidates in the reconstruction part. To account for stochasticity we repeated the pipeline $10$ times.
% IB: Missing conclusion? Did we reconstruct?

\paragraph{Real-world GRNs}
We used information on gene regulation from two public databases: RegulonDB \citep{santos2019regulondb} and SubtiWiki \citep{pedreira2022current, florez2009community}. Both databases contain experimentally validated information about gene regulatory interactions and their type (activation/repression) for a single bacteria species: Regulon for \textit{Escherichia coli} and SubtiWiki for \textit{Bacillus subtilis}. We only left the entries associated with transcriptional gene regulation. From both datasets we filtered our regulation edges of unknown type and duplicated edges. Regulation interactions can be context-dependent: under different conditions the same gene can either activate or repress the target gene \citep{ong2011enhancer}. Our model do not account for such scenarios, so we had to filter out such duplicated edges.

%Since we were interested only in transcriptional gene regulatory interactions, we removed all entries of translational and indirect control regulation from SubtiWiki as well as any mRNA-regulators and RNA-switches.

For fitting the GRASMOS parameters of real-world GRNs, we: used a fine-grained exploration of parameters varying in $[0.1,0.9]$ with increment of $0.05$ belonging to the SC, TC, and NO models; and, for complexity reasons, we evaluated the BNC model subspace using a coarse-grid with each parameter varying in $\{0.25, 0.5, 0.75\}$. For $\Theta$s belonging to multiple models, we compared the results from these approaches, made sure that they were consistent, and identified a good candidate to run a ``refined'' BNC search with $432$ additional $\Theta$-candidates in the vicinity (searching through $\Theta$s roughly within $\pm 0.125$ from the identified candidate).

We estimated $\Lh{\Theta}$ of the BNC parameters using a parallel implementation on our university's computing cluster. %RIT research computing cluster \citep{https://doi.org/10.34788/0s3g-qd15}.
% IB: UNANONYMIZE IF ACCEPTED!!!
We used up to $432$ nodes with each core estimating a likelihood of a single $\Theta$-candidate via MCMC sampling (Algorithm \ref{alg:BNC_lh}). Each core is equipped with Intel\textsuperscript\textregistered Xeon\textsuperscript\textregistered Gold 6150 CPU @ 2.70GHz. The RAM upper limit for our computation was 2048 MB.

\section{Results and Discussion}
\subsection{Synthetic data}
For synthetic datasets, we tested how well our framework based on the total likelihood reconstructs the $\Theta$ values that were used to generate an edge signage instance. We found that, overall, it was able to reconstruct the  $\Theta$-generator %decently
well. Figure \ref{fig:theta_recovery} shows the percentage of test cases when the $\Theta$-generator equaled the reconstructed $\Theta_{\textrm{MLE}}$ exactly, and when the $\xi$-portion of $\Theta$ and $\Theta_{\textrm{MLE}}$ were identical, respectively. Reconstruction of the $\xi$-parameters was particularly successful. Moreover, even when $\Theta$ and $\Theta_{\textrm{MLE}}$ differed, they were very close in their $L_1$-norm, see Figure \ref{fig:norm_recovery}.

%\note{[P2] The paragraph on random expectation of the L1-norm difference for our evaluation pipeline}

\begin{figure}
\begin{minipage}{.45\textwidth}
    \centering
    \includegraphics[width=\columnwidth]{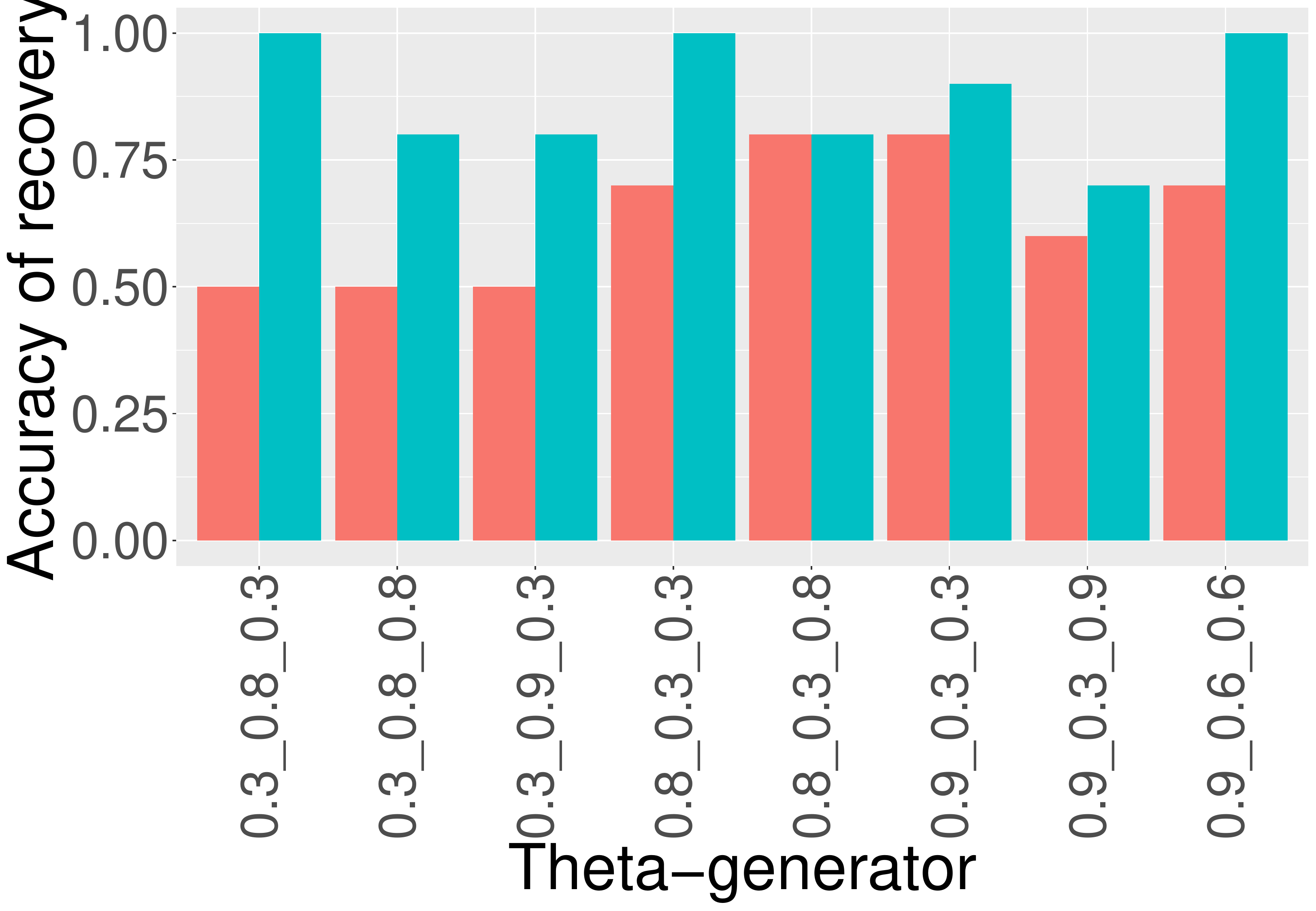}
    \caption{The absolute accuracy of $\Theta$ (red) and $\E$ (blue) recovery for SC model. The format of the labels is $\xi_{A,*}$ $\xi_{R,*}$, $\q{}$}
    \label{fig:theta_recovery}
    \end{minipage}
    \hspace{0.5cm}
    \begin{minipage}{.45\textwidth}
    \centering
    \includegraphics[width=\columnwidth]{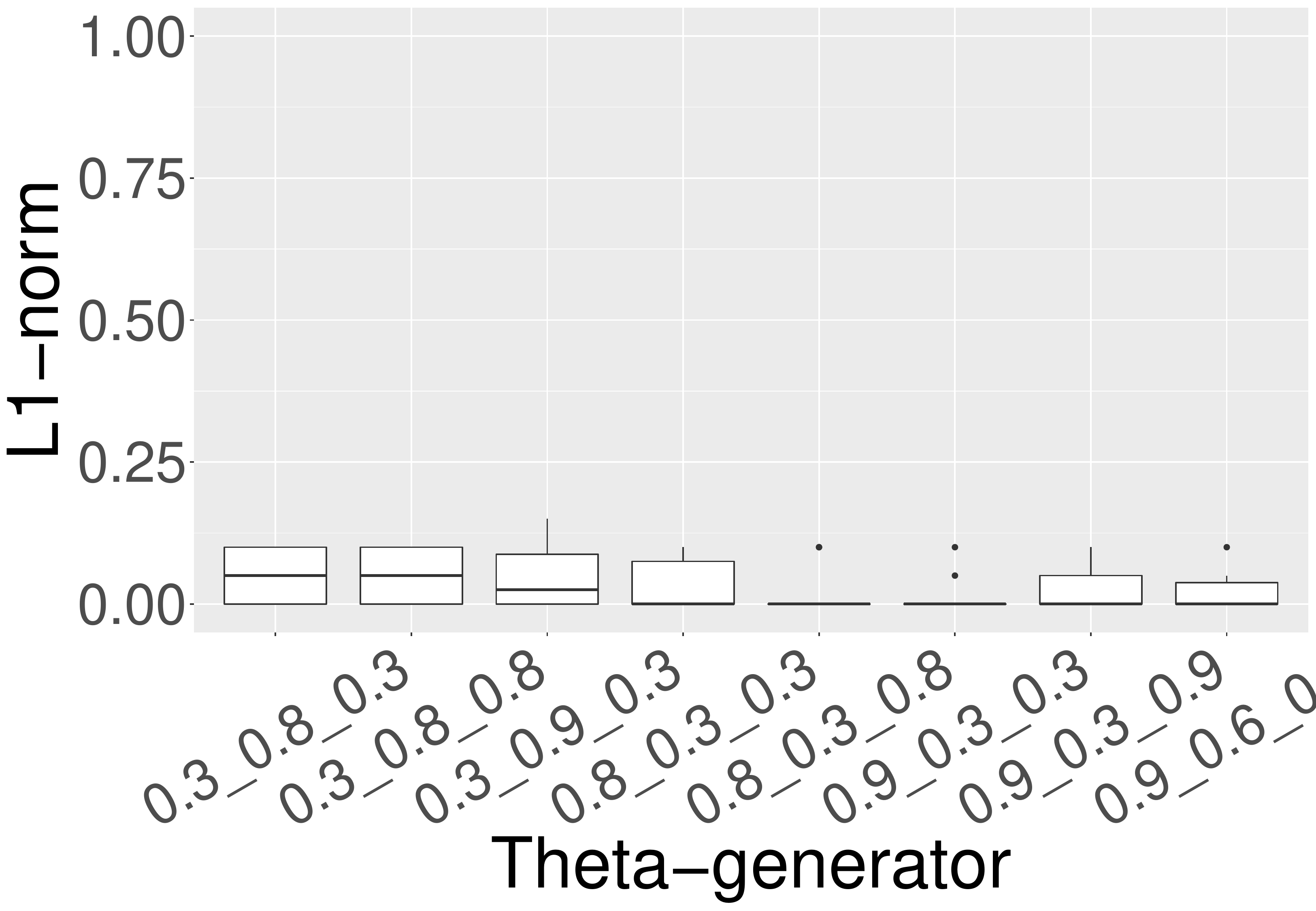}
    \caption{The accuracy of the $\Theta$ recovery in SC: The average and the standard deviation
    % IB: with respect to what???
    of the $L1$-norm between the $\Theta$-generator and the best among the tested $\Theta$-candidates. The format of the labels is $\xi_{A,*}$ $\xi_{R,*}$, $\q{}$}
    \label{fig:norm_recovery}
    \end{minipage}
\end{figure}

% \begin{figure}
%     \centering
%     \includegraphics[width=\columnwidth]{plots/P1_recovery_accuracy.pdf}
%     \caption{The absolute accuracy of $\Theta$ and $\E$-generators recovery}
%     \label{fig:theta_recovery}
% \end{figure}

% \begin{figure}
%     \centering
%     \includegraphics[width=\columnwidth]{plots/P2_recovery_accuracy_norm.pdf}
%     \caption{The accuracy of the $\Theta$ recovery: The average and the standard deviation
%     % IB: with respect to what???
%     of the $L1$-norm between the $\Theta$-generator and the best among the tested $\Theta$-candidates}
%     \label{fig:norm_recovery}
% \end{figure}

\subsection{Results on real-world GRNs}
Table \ref{tab:best_lh_bothdf} presents 5 best $\Theta$ candidates per model for the SubtiWiki and Regulon datasets. For both datasets, the BNC parameters had the best performance among all models. However, some of the top BNC parameters actually belonged to the SC model for SubtiWiki. The source-consistency hypothesis corresponds well to the existence of \textit{operons} -- groups/clusters of bacterial genes that are co-located on the DNA strand and controlled by the same gene-regulator \citep{salgado2001regulondb}. Interestingly, in case of Regulon, none of the top-5 $\Theta$ belonged to the SC model. Moreover, all of the top parameters implied that genes tend to activate the genes from groups other than they belong to. This observation might corroborate the existence of feed-forward transcriptional control with interchanging edge signs suggested for certain GRNs \citep{sasse2015feed} and raises the question about whether the bacteria species corresponding to these two data sets indeed have different gene regulation mechanisms, or the discrepancies should be explained by other factors (missing edges, noise associated with data curation, \etc).

\begin{table}[!t]
\centering
\begin{tabular}{rlrrrrrr}
  \toprule
  \hline
 & model & $\xi_{AA}$ & $\xi_{AR}$ & $\xi_{RA}$ & $\xi_{RR}$ & $\q{}$ & $\Lh{\Theta}$ \\ 
  \toprule
1 & BNC & 0.99 & 0.99 & 0.20 & 0.15 & 0.40 & 557.12 \\ 
  2 & BNC/SC & 0.99 & 0.99 & 0.15 & 0.15 & 0.40 & 557.71 \\ 
  3 & BNC & 0.99 & 0.99 & 0.20 & 0.15 & 0.50 & 558.71 \\ 
  4 & BNC/SC & 0.99 & 0.99 & 0.15 & 0.15 & 0.50 & 558.93 \\ 
  5 & BNC & 0.99 & 0.99 & 0.15 & 0.20 & 0.40 & 559.24 \\ 
  \midrule
  6 & SC & 0.95 & 0.95 & 0.10 & 0.10 & 0.45 & 581.71 \\ 
  7 & SC & 0.95 & 0.95 & 0.10 & 0.10 & 0.50 & 581.88 \\ 
  8 & SC & 0.95 & 0.95 & 0.10 & 0.10 & 0.40 & 582.42 \\ 
  9 & SC & 0.95 & 0.95 & 0.10 & 0.10 & 0.55 & 582.91 \\ 
  10 & SC & 0.95 & 0.95 & 0.15 & 0.15 & 0.45 & 583.49 \\ 
  \midrule
  11 & NO & 0.65 & 0.65 & 0.65 & 0.65 & NA & 1484.93 \\ 
  \midrule
  12 & TC & 0.65 & 0.70 & 0.65 & 0.70 & 0.95 & 1485.03 \\ 
  13 & TC & 0.70 & 0.65 & 0.70 & 0.65 & 0.05 & 1485.03 \\ 
  14 & TC & 0.60 & 0.65 & 0.60 & 0.65 & 0.05 & 1485.07 \\ 
  15 & TC & 0.65 & 0.60 & 0.65 & 0.60 & 0.95 & 1485.07 \\ 
  16 & TC & 0.65 & 0.70 & 0.65 & 0.70 & 0.90 & 1485.19 \\ 
  \hline
  \\
  \toprule
  \hline
  & model & $\xi_{AA}$ & $\xi_{AR}$ & $\xi_{RA}$ & $\xi_{RR}$ & $\q{}$ & $\Lh{\Theta}$ \\ 
  \toprule
1 & BNC & 0.70 & 0.80 & 0.20 & 0.15 & 0.50 & 1127.74 \\ 
  2 & BNC & 0.70 & 0.80 & 0.25 & 0.10 & 0.50 & 1128.75 \\ 
  3 & BNC & 0.70 & 0.80 & 0.20 & 0.10 & 0.50 & 1129.03 \\ 
  4 & BNC & 0.70 & 0.80 & 0.25 & 0.15 & 0.50 & 1129.74 \\ 
  5 & BNC & 0.75 & 0.75 & 0.20 & 0.15 & 0.50 & 1129.84 \\ 
  \midrule
  6 & SC & 0.75 & 0.75 & 0.15 & 0.15 & 0.50 & 1130.33 \\ 
  7 & SC & 0.75 & 0.75 & 0.20 & 0.20 & 0.50 & 1130.45 \\ 
  8 & SC & 0.75 & 0.75 & 0.15 & 0.15 & 0.55 & 1130.51 \\ 
  9 & SC & 0.75 & 0.75 & 0.20 & 0.20 & 0.45 & 1130.78 \\ 
  10 & SC & 0.75 & 0.75 & 0.20 & 0.20 & 0.55 & 1130.91 \\ 
  \midrule
  11 & NO & 0.53 & 0.53 & 0.53 & 0.53 & NA & 1277.51 \\ 
  \midrule
  12 & TC & 0.55 & 0.50 & 0.55 & 0.50 & 0.60 & 1412.39 \\ 
  13 & TC & 0.50 & 0.55 & 0.50 & 0.55 & 0.40 & 1412.39 \\ 
  14 & TC & 0.55 & 0.50 & 0.55 & 0.50 & 0.55 & 1412.40 \\ 
  15 & TC & 0.50 & 0.55 & 0.50 & 0.55 & 0.45 & 1412.40 \\ 
  16 & TC & 0.55 & 0.50 & 0.55 & 0.50 & 0.65 & 1412.44 \\ 
  \hline
\end{tabular}
\caption{Five best $\Theta$s per model for SubtiWiki (top) and Regulon (bottom) dataset. The lower $\Lh{\Theta}$, the better.}
\label{tab:best_lh_bothdf}
\end{table}

Also, we assessed the accuracy of the MCMC sampling for the likelihood estimation of those $\Theta$ instances for which we have the exact solution (\ie, instances of the SC, TC, and NO models). We aimed to have a $(1+ 1/n)$-multiplicative accuracy, \ie, at most $0.2\%$ likelihood error for our data sets. (We used this target accuracy for our computation of the number of needed samples, $K$.) In most cases our MCMC computations were within the target accuracy.
% IB: What is the (stats)???
In those cases when we were off by more than $(1+ 1/n)$, the corresponding $\Theta$-candidates had $\Lh{\Theta}$ far from $\Lh{\Theta_{\mathrm{MLE}}}$, and the likelihoods for the $\Theta$s on the BNC coarse grid were vastly different. This meant that despite having more inaccurate $\Lh{\Theta}$ estimates for these (few) $\Theta$s than we hoped for, the coarse search eliminated these $\Theta$s from consideration and therefore eliminated the inaccuracies. In all cases when the $\Theta$-candidates were close to $\Theta_{MLE}$, the MCMC sampling accuracy was within the target accuracy (Figures \ref{fig:MCMC_accuracy_SubtiWiki}, \ref{fig:MCMC_accuracy_regulon}).

 \begin{figure}
    \centering  \includegraphics[height=8.5cm]{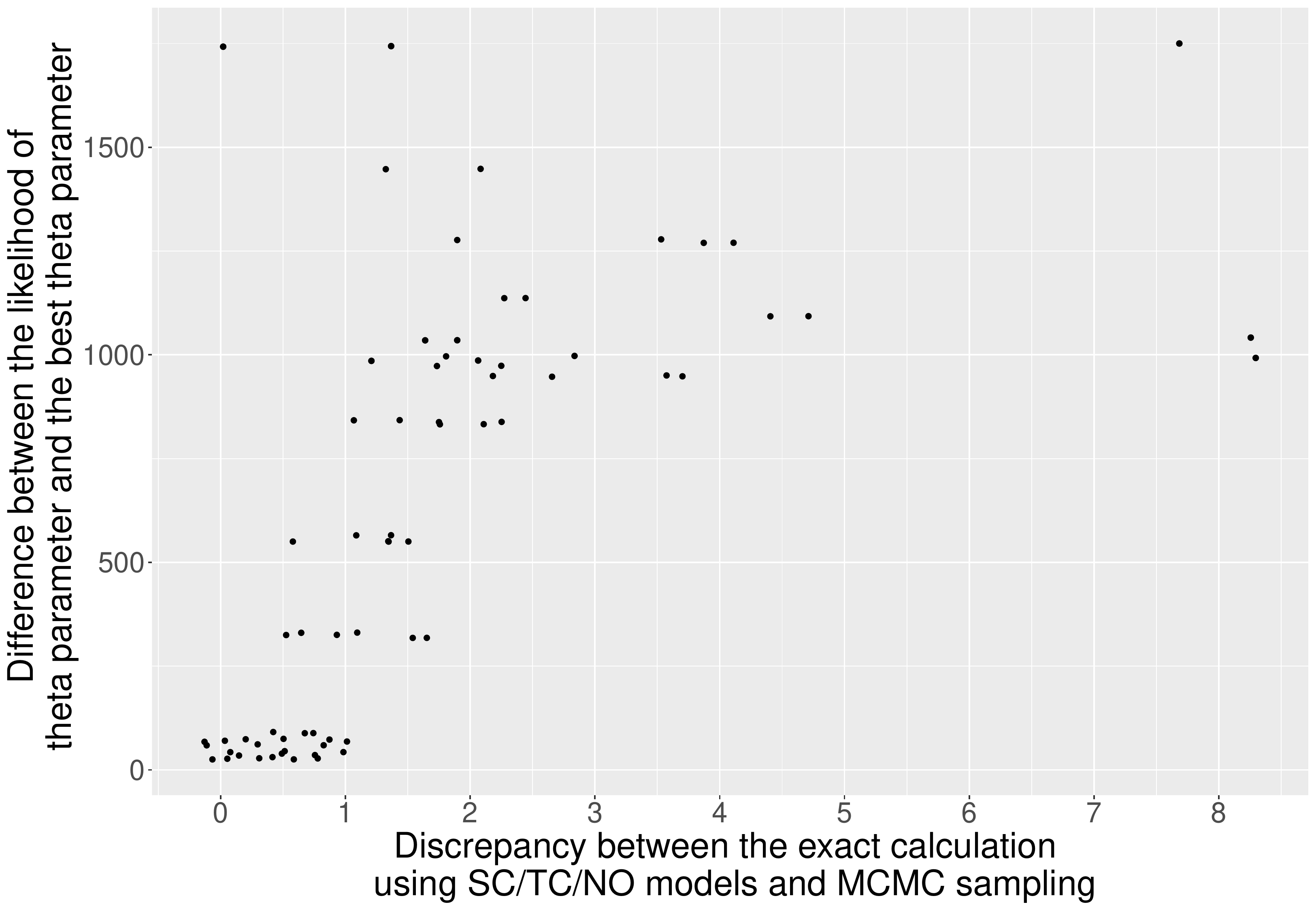}
    \caption{The accuracy of the MCMC estimation of $\Lh{\Theta}$ for SubtiWiki. Each point corresponds to a single instance of $\Theta$, where $\Lh{\Theta}$ can be computed exactly. The $x$-axis of the point corresponds to $|\Lh{\Theta}_{\mathrm{MCMC}}-\Lh{\Theta}_{\mathrm{exact}}|$ (accuracy of the estimate), and the $y$-axis to $|\Lh{\Theta}_{\mathrm{MCMC}}-\Lh{\Theta_{\mathrm{MLE}}}|$ (proximity to the optimal likelihood). The $\Theta$-candidate points with large $x$-coordinates also have a large $y$-coordinate, \ie, are far from the optimal likelihood, and get eliminated in the refined search. All $\Theta$-candidates close to $\Lh{\Theta_{MLE}}$ have accuracy within $1$ of the true value.}
    \label{fig:MCMC_accuracy_SubtiWiki}
    \end{figure}
    
    \begin{figure}
     \centering
     \includegraphics[height=8.5cm]{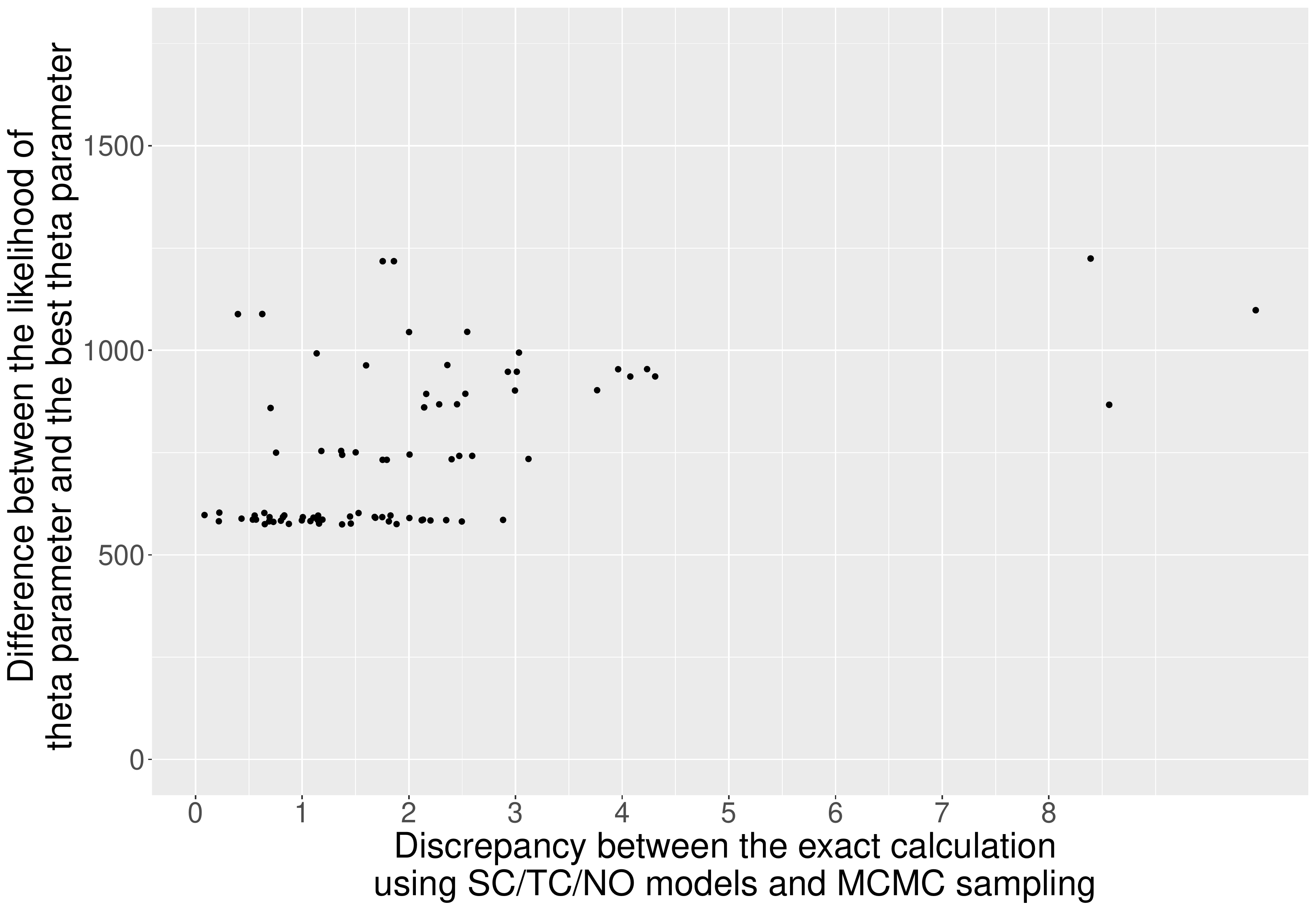}
    \caption{The accuracy of the MCMC estimation of $\Lh{\Theta}$ for Regulon, see Figure \ref{fig:MCMC_accuracy_SubtiWiki} for the plot description. 
    Compared to the SubtiWiki dataset, we suspect that the accuracy is worse, because the $\Theta$-candidates are further from the $\Theta_{MLE}$.}
    % IB: Shouldn't there be at least one candidate close to MLE???
    %\caption{The accuracy of estimation of $\Lh{\Theta}$ for Regulon using MCMC sampling compared to the exact algorithms with respect to its difference from the $\Lh{\Theta_{MLE}}$. Each point corresponds to a single instance of $\Theta$. Compared to the SubtiWiki dataset, the accuracy is worse, because the $\Theta$-candidates are further from the $\Theta_{MLE}$}
    \label{fig:MCMC_accuracy_regulon}
    \end{figure}

%\note{[P12 comparison of running time of SC and BNC for real GRNs?}
\subsection{Running time} The analytical calculation of some vertices (Section Algorithmic speed-up of MCMC) and the reduction of the parameter space by half (Section Problem space symmetry) resulted in 16x actual speed-up of the likelihood estimation of BNC $\Theta$ parameters compared to the naive approach. Due to parallelization of the grid search, the total runtime of fitting the BNC parameters was around 2 days for Regulon, and 4 days for the SubtiWiki. You can find the details of the running time for BNC, SC/TC models in Table \ref{tab:runtime}.

\begin{table}[H]
\centering
\footnotesize

\begin{tabular}{llrlrrl}
 Dataset & Run & \#$\Theta$ & \# cores & Total time,sec & Mean time per $\Theta$,sec & Confidence interval,sec \\ 
  \hline
Regulon & BNC coarse & 243 & 243 & 73782 & 37838.04 & (23324;65431) \\ 
Regulon & BNC refined & 432 & 432 & 30873 & 26439 & (22694;30064) \\ 
SubtiWiki & BNC coarse & 243 & 243 & 88416 & 44869 & (25956;77424) \\ 
SubtiWiki & BNC refined & 432 & 432 & 25511 & 22021 & (19912;24239) \\ 
SubtiWiki & SC/TC & 13718& 1 & 347.18 & 0.03 & (0.02;0.04) \\ 
Regulon & SC/TC & 13718 & 1 & 332.57 & 0.02 & (0.01;0.08) \\ 
  \hline
\end{tabular}
\caption{Summary of the running time for estimating $\Lh{\Theta}$ for BNC (coarse/refined grid search) and SC/TC models. BNC parameters were fit using one core per parameter, SC/TC uses a single core for all parameters.}
 \label{tab:runtime}
\end{table}

\section{Conclusion}
In this work we stated a novel Maximum-Likelihood-based the total model selection problem for gene regulation, GRASMOS, developed a fitting framework for the problem and showcased its usage for two gene regulatory networks of $B. subtilis$ and $E. coli$. Our graph signage models and the model selection framework opened up a plethora of directions for future research. Among them:

\begin{itemize}
    \item Proving the NP-hardness of the BNC $\Theta$ parameters estimation.
    \item We used the master vertex assignment $\tilde{C}$ (visually shown for SubtiWiki dataset in Figure \ref{fig:GRN_SubtiWiki}) to estimate the likelihood, but how is it related to the vertex assignment with the maximum likelihood? In other words, is there any connection of $\tilde{C}$ to $C_{MLE,\Theta} = \arg \max_{C \in \C{}}\Lh{\Theta|C}$?
    \item In the same $\tilde{C}$ of SubtiWiki, we noticed that for $\Theta_{MLE}$ there were many vertices $v_x$ for which the probability of distribution $\tilde{C}$ was uniform, i.e., they had exactly 50\% chance of being assigned an activator during the course of the BNC algorithm. We think it happens because such vertices have low degrees, and the model correctly identified that there is not enough information for assigning them to specific groups. Are there topology-aware models that take vertex degrees into account and achieve a better fit than our current (topology-agnostic) models?

    \item Is our framework robust for missing edges? What is the percentage of the missing edges it can handle, if the edges are removed uniformly at random? And if they are removed proportionally to their degrees?
    \item Does the high explanatory power of the $\Theta_{MLE}$ translate into high predictive power?
\end{itemize}

\begin{figure}[H]
    \centering
    \includegraphics[width=\columnwidth]{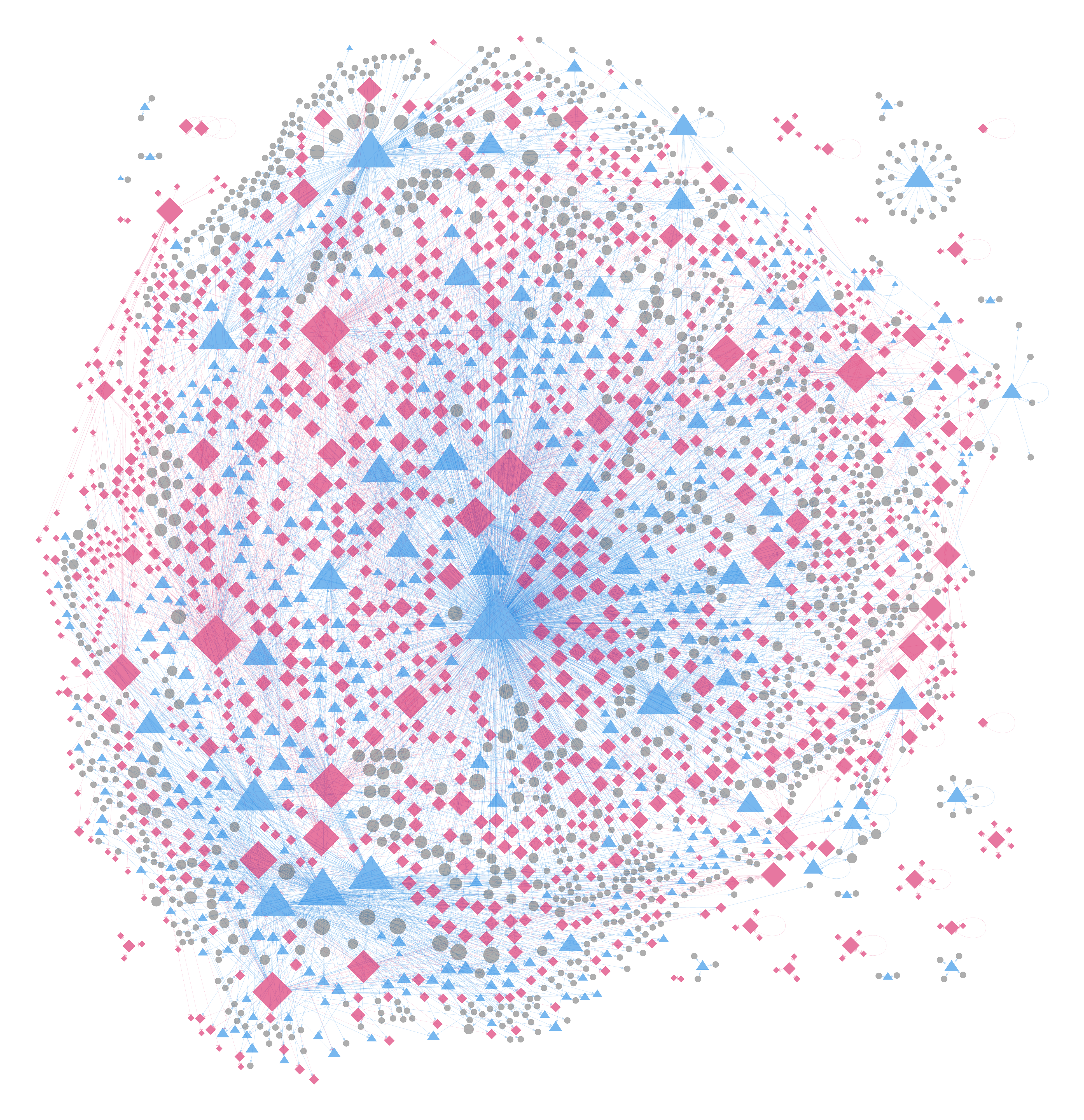}
    \caption{SubtiWiki signed GRN with the ``master'' vertex assignment $\tilde{C}$ corresponding to $\Theta_{MLE}$.
    % IB: Corresponding to which model???
    Red diamonds - repressors in $\tilde{C}$, blue triangles = activators in $\tilde{C}$, gray circles = vertices where the assignment of $\tilde{C}$ is ambiguous, corresponding to probability exactly $1/2$ of being activator or repressor in the corresponding node assignment subspace. The symbol sizes are proportional to the total degree of the vertices.}
    \label{fig:GRN_SubtiWiki}
\end{figure}

\bibliographystyle{plainnat}
\bibliography{ref}
\end{document}